\def\Z{\mathbb Z}
\def\Q{\mathbb Q}
\def\C{\mathbb C}
\def\1{{\bf{1}}}
\def\footnoterule{\kern 1mm \hrule width 7cm \kern 2.2mm}%
 \newtheorem{thm}{Theorem}[section]
 \newtheorem{prp}[thm]{Proposition}
 \newtheorem{lem}[thm]{Lemma}
 \newtheorem{cor}[thm]{Corollary}
\newcommand{\bea}{\begin{eqnarray}}
\newcommand{\eea}{\end{eqnarray}}
\newcommand{\be}{\begin{equation}}
\newcommand{\ee}{\end{equation}}
\begin{document}

\title{Coupled Hall-Littlewood functions, vertex operators and the $q$-boson model}
\author{
\  \ Na Wang\dag, Chuanzhong Li\ddag\footnote{Corresponding author's email:lichuanzhong@nbu.edu.cn}\\
\dag\small Department of mathematics and statistics, Henan University, Kaifeng, 475001, China.\\
\ddag\small Department of Mathematics,  Ningbo University, Ningbo, 315211, China}

\maketitle
\thanks{}
\date{}
\begin{abstract}
In this paper, we firstly give the definition of the coupled Hall-Littlewood function and its realization in terms of vertex operators. Then we construct the representation of two-site generalized $q$-boson model in the algebra of coupled Hall-Littlewood functions. Finally, we find that the vertex operators which generate coupled Hall-Littlewood functions can also be used to obtain the  partition function of the A-model topological string on the conifold.

\end{abstract}
\noindent

\noindent Mathematics Subject Classifications (2010).  37K05, 37K10, 37K20, 17B65, 17B67.\\
{\bf Keywords: }{coupled Hall-Littlewood functions, vertex operators, $q$-boson model, topological strings.}

\section{Introduction}\label{sec1}
Symmetric functions have played an important role in mathematics (representation theory, combinatorics) for a long time\cite{Mac,stan}.
Symmetric functions also appear in mathematical physics, especially in integrable models. Kyoto school use Schur functions in a remarkable way to understand the KP and KdV hierarchies\cite{MJD}. T. Tsuda defined the Universal Character (UC) hierarchy (a generalization of KP hierarchy) and obtained that the tau functions of UC hierarchy can be realized in terms of universal characters. He also proved that the UC hierarchy has relations with Painlev\'e equations by the similar reductions\cite{Tsuda}. In this paper, we consider two different subjects: the algebra of coupled Hall-Littlewood functions and the $q$-boson model.

The first purpose of this paper is to give the definition of coupled Hall-Littlewood functions. Schur functions and Hall-Littlewood functions are well known famous symmetric functions. The Hall-Littlewood function $Q_\lambda({\bf x})$ is a generalization of the Schur function $S_\lambda({\bf x})$, while the universal character $S_{[\lambda,\mu]}({\bf x},{\bf y})$ is a generalization of the Schur function $S_\lambda({\bf x})$ in another way.
\[\begin{tikzpicture}[scale=.75][>=stealth]

\draw (0,2) node {$S_\lambda({\bf x})$};
\draw (0,0) node {$S_{[\lambda,\mu]}({\bf x},{\bf y})$};
\draw (6,2) node {$Q_\lambda({\bf x})$};
\draw (6,0) node {$Q_{[\lambda,\mu]}({\bf x},{\bf y})$};
\draw (1.5,0) -- (4.5,0) [<-];
\draw (1.5,2) -- (4.5,2) [<-];
\draw (0,0.5) -- (0,1.5) [->];
\draw (6,0.5) -- (6,1.5) [->];
\draw (3,0.3) node {$t=0$};
\draw (3,2.3) node {$t=0$};
\draw (0.8,1) node {$\mu=\emptyset$};
\draw (6.8,1) node {$\mu=\emptyset$};
\end{tikzpicture}.\]
In this paper, we define the coupled Hall-Littlewood function $Q_{[\lambda,\mu]}({\bf x},{\bf y})$ and construct its vertex operator realization.

The second purpose of this paper is to give the representation of the two-site generalized $q$-boson model and the Quantum Inverse Scattering Method in the algebra of coupled Hall-Littlewood functions $Q_{[\lambda,\mu]}({\bf x},{\bf y})$. The $q$-boson model is a strongly correlated one-dimensional boson system on a finite chain, which is an integrable model and can be solved in the formulism of the Quantum Inverse Scattering Method \cite{KBI}. The $q$-boson model is important in several branches of modern physics, such as solid state physics and quantum nonlinear optics. In this paper, we will give that the two-site generalized $q$-boson model and the Quantum Inverse Scattering Method have an interpretation in terms of the algebra of coupled Hall-Littlewood functions.

The relation between the algebra of Hall-Littlewood functions and the $q$-boson model is known from \cite{PS,NVT}. The map from the states in the $q$-boson model to the Hall-Littlewood functions is given by
\[
\bigotimes_{i=0}^{M} |n_i\rangle_i\mapsto Q_{\lambda}({\bf x}),\quad \lambda=1^{n_1}2^{n_2}\ldots.
\]
Under this realization, the creation operator $B(u)$ coincides with
\be
H_M({\bf x},u^2)=1+q_1u^2+q_2u^4+\cdots q_Mu^{2M}
\ee
where ${\bf x}=(x_1,x_2,\cdots)$ and $q_k$ is the Hall-Littlewood function $Q_{(k)}({\bf x})$. The annihilation operator $C(u)$ is the adjoint operator
\be
H_M^\bot({\bf x},u^{-2})=1+q_1^\bot u^{-2}+q_2^\bot u^{-4}+\cdots q_M^\bot u^{-2M}
\ee
with respect to the standard scalar product in the algebra of Hall-Littlewood functions.

In this paper, we generalize the results in \cite{PS,NVT} to give the realization of two-site generalized $q$-boson model in the algebra of coupled Hall-Littlewood functions.
We define the map from the states in the two-site generalized $q$-boson model to the coupled Hall-Littlewood functions by
\[
\bigotimes_{i=0}^{M_1} |n_i\rangle_i^{(1)}\bigotimes\bigotimes_{i=0}^{M_2} |m_i\rangle_i^{(2)}\mapsto Q_{[\lambda,\mu]}({\bf x},{\bf y})
\]
with
\[
\lambda=1^{n_1}2^{n_2}\ldots,\quad \mu=1^{m_1}2^{m_2}\ldots.
\]
We find that in monodromy matrix
\bea
&&T(u)=\left( \begin{array}{cc}
A_2(u) & B_2(u) \\
C_2(u) & D_2(u)
\end{array} \right)\left( \begin{array}{cc}
A_1(u) & B_1(u) \\
C_1(u) & D_1(u)
\end{array} \right)\nonumber
\eea
the entries $B_1(u)$ and $B_2(u)$ coincide with
\[
 H_{M_1}({\bf x}-\tilde{\partial}_{\bf y},u^2)\ \text{and}\  H_{M_2}({\bf y}-\tilde{\partial}_{\bf x},u^2)
\]
respectively,
where $$\tilde{\partial}_{\bf x}=(\partial_{x_1}, \frac{1}{2}\partial_{x_2},\frac{1}{3}\partial_{x_3},\cdots), \quad \partial_{x_i}=\frac{\partial}{\partial{x_i}}.$$
The entries $C_1(u)$ and $C_2(u)$ coincide with
\[
 H_{M_1}^\bot({\bf x}-\tilde{\partial}_{\bf y},u^{-2})\  \text{and}\  H_{M_2}^\bot({\bf y}-\tilde{\partial}_{\bf x},u^{-2})
\]
respectively. Therefore, in the $M_1,M_2\rightarrow \infty$ limit, the entries  in monodromy matrix $T(u)$ have relations with
the vertex operators
\bea
&&\Gamma_1^-(z)=e^{\xi_t({\bf x}-\tilde{\partial}_{\bf y}, z)},\quad \Gamma_1^+(z)=e^{\xi(\tilde{\partial}_{\bf x}, z^{-1})},\\
&&\Gamma_2^-(z)=e^{\xi_t({\bf y}-\tilde{\partial}_{\bf x}, z)},\quad \Gamma_2^+(z)=e^{\xi(\tilde{\partial}_{\bf y}, z^{-1})},
\eea
where $\xi_t({\bf x},z)=\sum_{n=1}^\infty (1-t^n)x_n z^n,\ \xi({\bf x},z)=\xi_0({\bf x},z)$.
These vertex operators are the ones who generate the coupled Hall-Littlewood functions and the deformed fermions. From these results, we can get the expression of $|\Psi_N(u_1,\cdots,u_N)\rangle$ in the algebra of coupled Hall-Littlewood functions.

The third purpose of this paper is to discuss the relations between the vertex operators $\Gamma_i^-(z)$, $\Gamma_i^+(z)$ $(i=1,2)$ and the A-model topological string partition function $Z_{conifold}^{top}(z,t)$ on the conifold. It is known that the A-model topological string partition function $Z_{conifold}^{top}(z,t)$ on the conifold can be written as a fermionic correlator involving the vertex operators $\tilde{\Gamma}^-(z)=e^{\xi_t({\bf x},z)},\quad \tilde{\Gamma}^+(z)=e^{\xi(\tilde{\partial}_{\bf x},z^{-1})}$ with a particular specialization of the values of $z=q^{\pm 1/2},q^{\pm 3/2},q^{\pm 5/2}\cdots$. In this paper, we will give that $Z_{conifold}^{top}(z,t)$ can also be obtained from the vertex operators $\Gamma_i^-(z)$ and $\Gamma_i^+(z)$ $(i=1,2)$  with the same specialization of the values of $z$.

The paper is organized as follows. In section \ref{sect2}, we give the definition of coupled Hall-Littlewood functions and construct its vertex operator realization. In section \ref{sect3}, we recall the $q$-boson model. In section \ref{sect4}, we give the representation of the two-site generalized $q$-boson model in the algebra of coupled Hall-Littlewood functions, and we find that the actions of the entries in monodromy matrix on coupled Hall-Littlewood functions are obtained from the truncated expansions of the vertex operators discussed in section \ref{sect2}. In section \ref{sect5}, we give that the A-model topological string partition function $Z_{conifold}^{top}(z,t)$ on the conifold can be obtained from these vertex operators.
\section{Coupled Hall-Littlewood functions and vertex operators}\label{sect2}
In this section, we will give the definition of coupled Hall-Littlewood function, and construct its vertex operator realization.
Let ${\bf x}=(x_1,x_2,\cdots)$ and ${\bf y}=(y_1,y_2,\cdots)$.
The operators $q_n({\bf x})$ are determined by the generated function\cite{Jing,FW}:
\be\label{qr}
\sum_{n=0}^\infty q_n({\bf x})z^n=e^{\xi_t({\bf x},z)},\quad \xi_t({\bf x},z)=\sum_{n=1}^\infty(1-t^n) x_n z^n
\ee
and set $q_n({\bf x})=0$ for $n<0$. When $t=0$, the operators $q_n({\bf x})$ turn into the complete homogeneous symmetric function $h_n({\bf x})$ if we replace $ix_i$ with the power sum $p_i$.

For a pair of Young diagrams $\lambda=(\lambda_1,\lambda_2,\cdots,\lambda_l)$ and $\mu=(\mu_1,\mu_2,\cdots,\mu_{l'})$, let
\be
q_{[\lambda,\mu]}({\bf x},{\bf y})=q_{\lambda_1}({\bf x})\cdots q_{\lambda_l}({\bf x})q_{\mu_1}({\bf y})\cdots q_{\mu_{l'}}({\bf y}).
\ee
From the property of $q_\lambda$ discussed in \cite{Mac}, we have that $q_{[\lambda,\mu]}({\bf x},{\bf y})$ form a basis of $\Q(t)[{\bf x},{\bf y}]=\Q(t)\otimes\Q[{\bf x},{\bf y}]$.
The raising operator $R_{ij}$ is defined by
\[
R_{ij}\cdot\lambda=(\lambda_1,\cdots,\lambda_i+1,\cdots,\lambda_j-1,\cdots, \lambda_l).
\]
We define the operators $R_{\lambda_i \lambda_j}$ and $R_{\mu_i \mu_j}$ by
\be
R_{\lambda_i\lambda_j}\cdot q_{[\lambda,\mu]}({\bf x},{\bf y})=q_{[R_{ij}\cdot\lambda,\mu]}({\bf x},{\bf y}),\quad R_{\mu_i \mu_j}\cdot q_{[\lambda,\mu]}({\bf x},{\bf y})=q_{[\lambda,R_{ij}\cdot\mu]}({\bf x},{\bf y})
\ee
and define the operator $D_{\lambda_i \mu_j}$ by
\be
D_{\lambda_i \mu_j}\cdot q_{[\lambda,\mu]}({\bf x},{\bf y})=q_{[\tilde{\lambda},\tilde{\mu}]}({\bf x},{\bf y})
\ee
where $\tilde{\lambda}=(\lambda_1,\cdots,\lambda_i-1,\cdots,\lambda_l)$ and $\tilde{\mu}=(\mu_1,\cdots,\mu_j-1,\cdots,\mu_l)$.

For $\lambda=(\lambda_1,\lambda_2,\cdots,\lambda_l)$ and $\mu=(\mu_1,\mu_2,\cdots,\mu_{l'})$, we define the coupled Hall-Littlewood function $Q_{[\lambda,\mu]}=Q_{[\lambda,\mu]}({\bf x},{\bf y})$ as a polynomial of variables ${\bf x}$ and ${\bf y}$ in $\Q(t)[{\bf x},{\bf y}]$:
\be\label{Qdef}
Q_{[\lambda,\mu]}({\bf x},{\bf y})=\prod_{ i<j}\frac{1-R_{\lambda_i\lambda_j}}{1-tR_{\lambda_i\lambda_j}}\prod_{a<b}\frac{1-R_{\mu_a\mu_b}}{1-tR_{\mu_a\mu_b}}\prod_{i,a}\frac{1-D_{\lambda_i\mu_a}}{1-tD_{\lambda_i\mu_a}}\frac{1-t^2D_{\lambda_i\mu_a}}{1-tD_{\lambda_i\mu_a}}q_{[\lambda,\mu]}({\bf x},{\bf y})
\ee
where $1\leq i,j\leq l$ and $1\leq a,b\leq l'$.

For Young diagram $\lambda=(\lambda_1,\lambda_2,\cdots,\lambda_l)$, we use $|\lambda|$ to denote $\lambda_1+\lambda_2+\cdots+\lambda_l$. The notation $\lambda\geq\mu$ means $|\lambda|=|\mu|$ and $\lambda_1+\cdots+\lambda_i\geq\mu_1+\cdots+\mu_i$ for all $i$.
From (\ref{Qdef}), we have
\be\label{qqqqq}
Q_{[\lambda,\mu]}({\bf x},{\bf y})=\sum_{\nu\geq\lambda,\eta\geq\mu}a_{\nu\eta}^{\lambda\mu}(t)q_{[\nu,\eta]}({\bf x},{\bf y})+\sum_{|\xi|<|\lambda|,|\tau|<|\mu|}a_{\xi\tau}^{\lambda\mu}(t)q_{[\xi,\tau]}({\bf x},{\bf y})
\ee
where the polynomials $a_{\nu\eta}^{\lambda\mu}(t)\in\Z[t]$, $a_{\lambda\mu}^{\lambda\mu}(t)=1$. Then the polynomials $Q_{[\lambda,\mu]}({\bf x},{\bf y})$ form a basis of $\Q(t)[{\bf x},{\bf y}]$, i.e.,
\be\label{QQmult}
Q_{[\kappa,\theta]}Q_{[\nu,\eta]}=\sum M_{[\kappa,\theta],[\nu,\eta]}^{[\lambda,\mu]}Q_{[\lambda,\mu]}
\ee
where the structure constant $M_{[\kappa,\theta],[\nu,\eta]}^{[\lambda,\mu]}$ is given in the following.

Define the degree of each variables $x_n, y_n\  (n=1,2,\cdots)$ by
\[
\text{deg }x_n=n,\quad \text{deg }y_n=-n
\]
then $Q_{[\lambda,\mu]}({\bf x},{\bf y})$ is a homogeneous polynomial of degree $|\lambda|-|\mu|$.

Note that the Hall-Littlewood function $Q_\lambda({\bf x})$ is the special case $\mu=\emptyset$ of the coupled Hall-Littlewood function $Q_{[\lambda,\mu]}({\bf x},{\bf y})$:
\be
Q_\lambda({\bf x})=\prod_{ i<j}\frac{1-R_{ij}}{1-tR_{ij}}q_{\lambda}({\bf x})=Q_{[\lambda,\emptyset]}({\bf x},{\bf y}).
\ee
The generalized Q-functions defined in \cite{yuji} are the special case $t=-1$ of coupled Hall-Littlewood functions with a slight difference. The author finds that the generalized Q-functions give the tau functions of B-type UC hierarchy.
The universal character $S_{[\lambda,\mu]}({\bf x},{\bf y})$ is the special case $t=0$ of the coupled Hall-Littlewood function $Q_{[\lambda,\mu]}({\bf x},{\bf y})$:
\begin{eqnarray*}
S_{[\lambda,\mu]}({\bf x},{\bf y})&=&\text{det}\left( \begin{array}{cc}
h_{\mu_{l'-i+1}+i-j}({\bf y}), & 1\leq i\leq l' \\
h_{\lambda_{i-l'}-i+j}({\bf x}), & l'+1\leq i\leq l+l'
\end{array} \right)_{1\leq i,j\leq l+l'}\\
&=&\prod_{ i<j}(1-R_{\lambda_i\lambda_j})\prod_{a<b}(1-R_{\mu_a\mu_b})\prod_{i,a}(1-D_{\lambda_i\mu_a})h_{[\lambda,\mu]}({\bf x},{\bf y})
\end{eqnarray*}
where $h_{[\lambda,\mu]}({\bf x},{\bf y})=h_\lambda({\bf x})h_{\mu}({\bf y})$.

For example, $\lambda=(2,1),\ \mu=(1)$,
\begin{eqnarray*}
Q_{[\lambda,\mu]}({\bf x},{\bf y})&=&\frac{1-R_{\lambda_1\lambda_2}}{1-tR_{\lambda_1\lambda_2}}\frac{(1-D_{\lambda_1\mu_1})(1-t^2D_{\lambda_1\mu_1})}{(1-tD_{\lambda_1\mu_1})^2}\frac{(1-D_{\lambda_2\mu_1})(1-t^2D_{\lambda_2\mu_1})}{(1-tD_{\lambda_2\mu_1})^2}q_{[\lambda,\mu]}({\bf x},{\bf y})\\
&=&(1-(1-t)R_{\lambda_1\lambda_2})(1-(1-t)^2D_{\lambda_1\mu_1}-(1-t)^2D_{\lambda_2\mu_1})q_{[\lambda,\mu]}({\bf x},{\bf y})\\
&=&(q_2({\bf x})q_1({\bf x})-(1-t)q_3({\bf x}))q_1({\bf y})-(1-t)^2 q_1^2({\bf x})-t(1-t)^2q_2({\bf x}).
\end{eqnarray*}
In the special case $t=0$, $S_{[\lambda,\mu]}({\bf x},{\bf y})=(h_2({\bf x})h_1({\bf x})-h_3({\bf x}))h_1({\bf y})- h_1^2({\bf x})=(\frac{x_1^3}{3}-x_3)y_1-x_1^2$.

Introduce the following vertex operators
\bea
&&\Gamma_1^-(z)=e^{\xi_t({\bf x}-\tilde{\partial}_{\bf y}, z)},\quad \Gamma_1^+(z)=e^{\xi(\tilde{\partial}_{\bf x}, z^{-1})},\label{g1}\\
&&\Gamma_2^-(z)=e^{\xi_t({\bf y}-\tilde{\partial}_{\bf x}, z)},\quad \Gamma_2^+(z)=e^{\xi(\tilde{\partial}_{\bf y}, z^{-1})},\label{g2}
\eea
where $\xi({\bf x}, z)$ is the special case $t=0$ of $\xi_t({\bf x}, z)$.
Define
\bea
X^\pm(z)&=&\sum_{n\in\Z}X^\pm_nz^n=e^{\pm\xi_t({\bf x}-\tilde{\partial}_{\bf y}, z)}e^{\mp\xi(\tilde{\partial}_{\bf x}, z^{-1})},\\
Y^\pm(z)&=&\sum_{n\in\Z}Y^\pm_nz^{n}=e^{\pm\xi_t({\bf y}-\tilde{\partial}_{\bf x}, z)}e^{\mp\xi(\tilde{\partial}_{\bf y}, z^{-1})}.
\eea
The operators $X_i^\pm$ satisfy the following deformed fermionic relations:
\bea
X_{n-1}^\pm X_m^\pm-tX_{n}^\pm X_{m-1}^\pm +X_{m-1}^\pm X_n^\pm-tX_{m}^\pm X_{n-1}^\pm&=&0,\nonumber\\
X_n^+X_{m-1}^--t X_{n-1}^+X_{m}^-+X_m^+X_{n-1}^--t X_{m-1}^+X_{n}^-&=&(1-t)^2\delta_{m+n,1}.\nonumber
\eea
The same relations hold also for $Y_i^\pm$, and $X_i^\pm$ and $Y_i^\pm$ are commutative.
\begin{prp}
The operators $X_i^+$ and $Y_i^+$ are raising operators for the coupled Hall-Littlewood functions such that
\begin{eqnarray}
Q_{[\lambda,\mu]}({\bf x},{\bf y})&=&X_{\lambda_1}^+\cdots X_{\lambda_l}^+Y_{\mu_1}^+\cdots X_{\mu_{l'}}^+\cdot 1\\
&=&[z^\lambda w^\mu]X^+(z_1)\cdots X^+(z_l)Y^+(w_1)\cdots Y^+(w_{l'})\cdot 1,
\end{eqnarray}
where the Young diagrams $\lambda=(\lambda_1,\lambda_2,\cdots,\lambda_l),\ \mu=(\mu_1,\mu_2,\cdots,\mu_{l'})$ and the notation $[z^\lambda w^\mu]$ means taking the coefficient of $z_1^{\lambda_1}\cdots z_l^{\lambda_l}w_1^{\mu_1}\cdots w_{l'}^{\mu_{l'}}$ in the expansion of $X^+(z_1)\cdots X^+(z_l)Y^+(w_1)\cdots Y^+(w_{l'})\cdot 1$.
\end{prp}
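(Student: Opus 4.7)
The plan is to prove the second (generating-function) identity directly by normal-ordering the vertex operator product; the first identity then follows at once by reading off the coefficient, using $X^+(z)=\sum_n X_n^+ z^n$ and $Y^+(w)=\sum_n Y_n^+ w^n$. First I would split each factor into a creation part and annihilation parts: since $\xi_t({\bf x},z)$, $\xi_t(\tilde\partial_{\bf y},z)$ and $\xi(\tilde\partial_{\bf x},z^{-1})$ involve pairwise disjoint variables and derivatives, they commute, so
\[
X^+(z)=e^{\xi_t({\bf x},z)}\,e^{-\xi_t(\tilde\partial_{\bf y},z)}\,e^{-\xi(\tilde\partial_{\bf x},z^{-1})},
\]
and analogously $Y^+(w)=e^{\xi_t({\bf y},w)}\,e^{-\xi_t(\tilde\partial_{\bf x},w)}\,e^{-\xi(\tilde\partial_{\bf y},w^{-1})}$. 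The strategy is then to push each annihilation past the creations lying to its right via the Baker--Campbell--Hausdorff identity $e^A e^B=e^B e^A\,e^{[A,B]}$, which applies because every commutator required is a scalar.

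Next I would compute the three relevant central commutators, using $[\partial_{x_n},x_m]=\delta_{nm}$ together with $\sum_{n\ge 1}u^n/n=-\log(1-u)$:
\[
[\xi(\tilde\partial_{\bf x},z^{-1}),\xi_t({\bf x},z')]=\log\frac{1-tz'/z}{1-z'/z},\qquad [\xi(\tilde\partial_{\bf y},w^{-1}),\xi_t({\bf y},w')]=\log\frac{1-tw'/w}{1-w'/w},
\]
\[
[\xi_t(\tilde\partial_{\bf y},z),\xi_t({\bf y},w)]=\log\frac{(1-tzw)^2}{(1-zw)(1-t^2zw)}.
\]
No further mixed commutator is needed: the annihilation pieces $e^{-\xi_t(\tilde\partial_{\bf x},w_a)}$ and $e^{-\xi(\tilde\partial_{\bf y},w_a^{-1})}$ of each $Y^+(w_a)$ already sit to the right of every creation in the chosen ordering, so they need not cross the $X$-block. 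After normal-ordering and using that every annihilation kills $1$, the product $X^+(z_1)\cdots X^+(z_l)Y^+(w_1)\cdots Y^+(w_{l'})\cdot 1$ will be
\[
\prod_{i<j}\frac{1-z_j/z_i}{1-tz_j/z_i}\prod_{a<b}\frac{1-w_b/w_a}{1-tw_b/w_a}\prod_{i,a}\frac{(1-z_iw_a)(1-t^2z_iw_a)}{(1-tz_iw_a)^2}\,\exp\!\Bigl(\sum_i\xi_t({\bf x},z_i)+\sum_a\xi_t({\bf y},w_a)\Bigr).
\]

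Finally, extracting the coefficient of $z_1^{\lambda_1}\cdots z_l^{\lambda_l}w_1^{\mu_1}\cdots w_{l'}^{\mu_{l'}}$ converts multiplication by $z_j/z_i$ (for $i<j$) into the raising operator $R_{\lambda_i\lambda_j}$, by $w_b/w_a$ into $R_{\mu_a\mu_b}$, and by $z_iw_a$ into the lowering operator $D_{\lambda_i\mu_a}$, each acting on the partition index of the coefficient; the exponential factor contributes exactly $q_{[\lambda,\mu]}({\bf x},{\bf y})$ by (\ref{qr}). The resulting expression is then the right-hand side of (\ref{Qdef}), namely $Q_{[\lambda,\mu]}({\bf x},{\bf y})$. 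The main obstacle will be the third commutator, where the double $t$-deformation (one $(1-t^n)$ from $\xi_t(\tilde\partial_{\bf y},z)$ and another from $\xi_t({\bf y},w)$) produces the characteristic $(1-t^n)^2$ giving the factor $\frac{(1-zw)(1-t^2zw)}{(1-tzw)^2}$ that matches the mixed $D$-piece of the definition; the accompanying subtlety is to avoid double-counting by also passing the $e^{-\xi_t(\tilde\partial_{\bf x},w_a)}$ piece of $Y^+(w_a)$ through $X^+(z_i)$, which it does not need to cross in the chosen ordering.
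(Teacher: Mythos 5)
Your proposal is correct and follows essentially the same route as the paper's own proof: normal-ordering $X^+(z_1)\cdots X^+(z_l)Y^+(w_1)\cdots Y^+(w_{l'})\cdot 1$ via the central commutators, arriving at the prefactor $\prod_{i<j}\frac{1-z_j/z_i}{1-tz_j/z_i}\prod_{a<b}\frac{1-w_b/w_a}{1-tw_b/w_a}\prod_{i,a}\frac{(1-z_iw_a)(1-t^2z_iw_a)}{(1-tz_iw_a)^2}$ times $\prod_i e^{\xi_t({\bf x},z_i)}\prod_a e^{\xi_t({\bf y},w_a)}$, and then extracting the coefficient of $z^\lambda w^\mu$ to recover the raising-operator definition (\ref{Qdef}). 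You are merely more explicit than the paper about which commutators contribute and about the coefficient-extraction step; the computations you outline all check out.
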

\begin{proof}
We calculate
\begin{eqnarray*}
&& X^+(z_1)\cdots  X^+(z_l) Y^+(w_1)\cdots Y^+(w_{l'})\\
&=&e^{\xi_t({\bf x}-\tilde{\partial}_{\bf y}, z_1)}e^{-\xi(\tilde{\partial}_{\bf x}, z^{-1}_1)}\cdots e^{\xi_t({\bf x}-\tilde{\partial}_{\bf y}, z_l)}e^{-\xi(\tilde{\partial}_{\bf x}, z^{-1}_l)}\\
&&e^{\xi_t({\bf y}-\tilde{\partial}_{\bf x}, w_1)}e^{-\xi(\tilde{\partial}_{\bf y}, w^{-1}_1)}\cdots e^{\xi_t({\bf y}-\tilde{\partial}_{\bf x}, w_{l'})}e^{-\xi(\tilde{\partial}_{\bf y}, w^{-1}_{l'})}\cdot 1\\
&=&\prod_{ i<j}\frac{1-z_j/z_i}{1-tz_j/z_i}\prod_{a<b}\frac{1-w_b/w_a}{1-tw_b/w_a}e^{\xi_t({\bf x}-\tilde{\partial}_{\bf y}, z_1)}\cdots e^{\xi_t({\bf x}-\tilde{\partial}_{\bf y}, z_l)}e^{\xi_t({\bf y}-\tilde{\partial}_{\bf x}, w_1)}\cdots e^{\xi_t({\bf y}-\tilde{\partial}_{\bf x}, w_{l'})}\cdot 1\\
&=&\prod_{ i<j}\frac{1-z_j/z_i}{1-tz_j/z_i}\prod_{a<b}\frac{1-w_b/w_a}{1-tw_b/w_a}\prod_{i,a}\frac{1-z_iw_a}{1-tz_iw_a}\frac{1-t^2z_iw_a}{1-tz_iw_a}\\
&&e^{\xi_t({\bf x}, z_1)}\cdots e^{\xi_t({\bf x}, z_l)}e^{\xi_t({\bf y}, w_1)}\cdots e^{\xi_t({\bf y}, w_{l'})}\cdot 1.
\end{eqnarray*}
Taking the coefficient of $z_1^{\lambda_1}\cdots z_l^{\lambda_l}w_1^{\mu_1}\cdots w_{l'}^{\mu_{l'}}$, we finish the proof.
\end{proof}
Also we can  easily get the following two corollaries.
\begin{cor}
\label{qQQq} For Hall-Littlewood functions $Q_\lambda({\bf x})$ and $q_k({\bf x})$ defined in \cite{Mac}, we have
\be
q_k({\bf y}-\tilde{\partial}_{\bf x})Q_\lambda({\bf x}-\tilde{\partial}_{\bf y})=Q_\lambda({\bf x}-\tilde{\partial}_{\bf y})q_k({\bf y}-\tilde{\partial}_{\bf x})
\ee
\end{cor}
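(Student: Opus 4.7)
The plan is to deduce the corollary from the commutativity of the generating vertex operators $\Gamma_1^-(z) = e^{\xi_t({\bf x} - \tilde{\partial}_{\bf y}, z)}$ and $\Gamma_2^-(w) = e^{\xi_t({\bf y} - \tilde{\partial}_{\bf x}, w)}$ defined in (\ref{g1})--(\ref{g2}). By (\ref{qr}), the operators $q_j({\bf x} - \tilde{\partial}_{\bf y})$ are the Fourier coefficients of $\Gamma_1^-(z)$, and the $q_k({\bf y} - \tilde{\partial}_{\bf x})$ are the Fourier coefficients of $\Gamma_2^-(w)$. Furthermore, applying the substitution ${\bf x} \mapsto {\bf x} - \tilde{\partial}_{\bf y}$ to the definition (\ref{Qdef}) specialized at $\mu = \emptyset$ presents $Q_\lambda({\bf x} - \tilde{\partial}_{\bf y})$ as a polynomial expression in the operators $\{q_j({\bf x} - \tilde{\partial}_{\bf y})\}_j$. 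Hence it suffices to show that each $q_k({\bf y} - \tilde{\partial}_{\bf x})$ commutes with every $q_j({\bf x} - \tilde{\partial}_{\bf y})$.

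First I would note that within each family, commutativity is immediate since $[x_n - \tfrac{1}{n}\partial_{y_n}, x_m - \tfrac{1}{m}\partial_{y_m}] = 0$ and analogously for the ${\bf y}$ family. The cross commutation I would establish at the level of generating functions: I compute
\[
[\xi_t({\bf x}-\tilde{\partial}_{\bf y}, z),\, \xi_t({\bf y}-\tilde{\partial}_{\bf x}, w)] = \sum_{n\geq 1} (1-t^n)^2 z^n w^n \Bigl(\tfrac{1}{n}[x_n, -\partial_{x_n}] + \tfrac{1}{n}[-\partial_{y_n}, y_n]\Bigr),
\]
and the two contributions cancel since $[x_n, \partial_{x_n}] = -[\partial_{y_n}, y_n] = -1$. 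Because this commutator is a scalar, the Baker--Campbell--Hausdorff formula collapses to a single exponential factor that is invariant under swapping the two exponentials, so $\Gamma_1^-(z)\,\Gamma_2^-(w) = \Gamma_2^-(w)\,\Gamma_1^-(z)$. Extracting the coefficient of $z^j w^k$ gives $[q_j({\bf x} - \tilde{\partial}_{\bf y}),\, q_k({\bf y} - \tilde{\partial}_{\bf x})] = 0$ for all $j,k$.

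Combining the two pieces, $Q_\lambda({\bf x} - \tilde{\partial}_{\bf y})$ lives in the commutative subalgebra generated by $\{q_j({\bf x} - \tilde{\partial}_{\bf y})\}_j$, each generator of which commutes with $q_k({\bf y} - \tilde{\partial}_{\bf x})$, and the claim follows. The only subtlety worth watching is the sign bookkeeping in the cancellation above — there is no difficulty of substance, since the mixed ${\bf x}/{\bf y}$ structure was engineered precisely so that the $\tilde{\partial}_{\bf y}$--${\bf y}$ pairing kills the $\tilde{\partial}_{\bf x}$--${\bf x}$ pairing.
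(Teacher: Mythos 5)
Your proof is correct, and it is essentially the route the paper intends: the corollary is stated as an easy consequence of the vertex-operator computation in the preceding proposition, whose key step is precisely the cancellation you verify, namely that the $\tilde{\partial}_{\bf y}$--${\bf y}$ pairing offsets the $\tilde{\partial}_{\bf x}$--${\bf x}$ pairing so that $\Gamma_1^-(z)$ and $\Gamma_2^-(w)$ commute. Extracting coefficients and using that $Q_\lambda$ is a universal polynomial in the $q_j$'s, exactly as you do, completes the argument.
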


\begin{cor}
The coupled Hall-Littlewood function can be written as
\be\label{QQQ}
Q_{[\lambda,\mu]}({\bf x},{\bf y})=Q_{\lambda}({\bf x}-\tilde{\partial}_{\bf y})Q_{\mu}({\bf y}-\tilde{\partial}_{\bf x})\cdot 1
\ee
where $Q_{\lambda}({\bf x})$ is the Hall-Littlewood function.
\end{cor}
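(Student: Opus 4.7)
The plan is to reduce this corollary to the preceding proposition by identifying $Q_\lambda({\bf x}-\tilde{\partial}_{\bf y})$ and $Q_\mu({\bf y}-\tilde{\partial}_{\bf x})$ with formal substitutions into the standard single-variable Hall-Littlewood vertex operator $e^{\xi_t({\bf u},z)}e^{-\xi(\tilde{\partial}_{\bf u},z^{-1})}$. The key structural observation is that $X^+(z)$ and $Y^+(z)$ are precisely this standard operator with ${\bf u}$ specialized to ${\bf x}-\tilde{\partial}_{\bf y}$ and ${\bf y}-\tilde{\partial}_{\bf x}$ respectively; this substitution preserves the internal commutation that generates the factor $\prod_{i<j}\frac{1-z_j/z_i}{1-tz_j/z_i}$, since $\tilde{\partial}_{\bf x}$ and $\tilde{\partial}_{\bf y}$ commute with each other.

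The proof would then proceed in two steps. First, I would apply $Q_\mu({\bf y}-\tilde{\partial}_{\bf x})$ to $1$: because $\tilde{\partial}_{\bf x}\cdot 1=0$, every $x$-derivative term drops out and one recovers $Q_\mu({\bf y})$, which by the standard Hall-Littlewood vertex operator realization equals $[w^\mu]\prod_{a<b}\frac{1-w_b/w_a}{1-tw_b/w_a}\prod_a e^{\xi_t({\bf y},w_a)}\cdot 1$. Second, I would apply $Q_\lambda({\bf x}-\tilde{\partial}_{\bf y})$ to $Q_\mu({\bf y})$. Writing $Q_\lambda({\bf x}-\tilde{\partial}_{\bf y})=[z^\lambda]\prod_{i<j}\frac{1-z_j/z_i}{1-tz_j/z_i}\prod_i e^{\xi_t({\bf x}-\tilde{\partial}_{\bf y},z_i)}$, I would split each exponential as $e^{\xi_t({\bf x},z_i)}e^{-\xi_t(\tilde{\partial}_{\bf y},z_i)}$ (these factors commute) and push the derivative exponentials through to act on $Q_\mu({\bf y})$. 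The resulting Wick contractions between $e^{-\xi_t(\tilde{\partial}_{\bf y},z_i)}$ and $e^{\xi_t({\bf y},w_a)}$ produce exactly the coupling factor $\prod_{i,a}\frac{1-z_iw_a}{1-tz_iw_a}\frac{1-t^2 z_iw_a}{1-tz_iw_a}$ that appeared in the proof of the proposition. Extracting the coefficient of $z^\lambda w^\mu$ therefore recovers the very expression evaluated in the proposition, namely $Q_{[\lambda,\mu]}({\bf x},{\bf y})$.

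I expect the main obstacle to be the careful bookkeeping of the Wick contractions under operator-valued substitution. In particular, one must verify that the auxiliary annihilation factors $e^{-\xi(\tilde{\partial}_{\bf x},z_i^{-1})}$ implicit in $X^+(z_i)$ contribute nothing in this calculation: once they are commuted all the way to the right, they act on a polynomial depending only on ${\bf y}$ and reduce to the identity. Corollary~\ref{qQQq} ensures that the product $Q_\lambda({\bf x}-\tilde{\partial}_{\bf y})Q_\mu({\bf y}-\tilde{\partial}_{\bf x})$ is unambiguously defined as a single operator, which legitimizes the order of application chosen in the plan.
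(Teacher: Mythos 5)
Your proposal is correct and follows essentially the same route as the paper: the corollary is read off from the intermediate line of the preceding proposition's computation, where after moving the pure annihilation exponentials to the right one is left with $\prod_{i<j}\tfrac{1-z_j/z_i}{1-tz_j/z_i}\prod_{a<b}\tfrac{1-w_b/w_a}{1-tw_b/w_a}\prod_i e^{\xi_t({\bf x}-\tilde{\partial}_{\bf y},z_i)}\prod_a e^{\xi_t({\bf y}-\tilde{\partial}_{\bf x},w_a)}\cdot 1$, whose coefficient of $z^\lambda w^\mu$ is exactly $Q_\lambda({\bf x}-\tilde{\partial}_{\bf y})Q_\mu({\bf y}-\tilde{\partial}_{\bf x})\cdot 1$, and your Wick-contraction factor $\prod_{i,a}\tfrac{(1-z_iw_a)(1-t^2z_iw_a)}{(1-tz_iw_a)^2}$ matches the one there. (Only a minor wording caveat: the factors $e^{-\xi(\tilde{\partial}_{\bf x},z_i^{-1})}$ do contribute the prefactor $\prod_{i<j}\tfrac{1-z_j/z_i}{1-tz_j/z_i}$ before becoming trivial at the far right, but since you build that prefactor into $Q_\lambda({\bf x}-\tilde{\partial}_{\bf y})$ directly, this does not affect your argument.)
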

By a direct calculation, the following proposition holds true.
\begin{prp}
The formula (\ref{QQmult}) can be written in the following algebraic form
\begin{eqnarray}
Q_{[\kappa,\theta]}Q_{[\nu,\eta]}&=&\prod_{ij}\frac{(1-tD_{\kappa_i\eta_j})^2}{(1-D_{\kappa_i\eta_j})(1-t^2D_{\kappa_i\eta_j})}\prod_{mn}\frac{(1-tD_{\theta_m\nu_n})^2}{(1-D_{\theta_m\nu_n})(1-t^2D_{\theta_m\nu_n})}\\
&&(Q_\kappa({\bf x}-\tilde{\partial}_{\bf y})Q_\nu({\bf x}-\tilde{\partial}_{\bf y})Q_{\theta}({\bf y}-\tilde{\partial}_{\bf x})Q_{\eta}({\bf y}-\tilde{\partial}_{\bf x}))\cdot 1
\end{eqnarray}
where
\be
(Q_\kappa({\bf x}-\tilde{\partial}_{\bf y})Q_\nu({\bf x}-\tilde{\partial}_{\bf y})Q_{\theta}({\bf y}-\tilde{\partial}_{\bf x})Q_{\eta}({\bf y}-\tilde{\partial}_{\bf x}))\cdot 1=\sum_{\lambda\mu}f_{\kappa\nu}^\lambda f_{\theta\eta}^\mu Q_{[\lambda,\mu]}
\ee
and $f_{\kappa\nu}^\lambda$, which is defined in \cite{Mac}, is the coefficient of Hall-Littlewood function $Q_{\lambda}({\bf x})$ in the expansion of $Q_\kappa({\bf x})Q_\nu({\bf x})$.
\end{prp}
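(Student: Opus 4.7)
The plan is to prove the two identities separately via the vertex operator/generating-function technology of Section~\ref{sect2}. The second is a short consequence of the corollaries together with the Hall-Littlewood Littlewood-Richardson rule, while the first requires careful bookkeeping of normal-ordering contractions.

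For the second identity, I would observe that $Q_\theta({\bf y}-\tilde{\partial}_{\bf x})\cdot 1 = Q_\theta({\bf y})$ (the $\tilde{\partial}_{\bf x}$'s annihilate the constant), so that $Q_\theta({\bf y}-\tilde{\partial}_{\bf x})Q_\eta({\bf y}-\tilde{\partial}_{\bf x})\cdot 1 = Q_\theta({\bf y})Q_\eta({\bf y}) = \sum_\mu f_{\theta\eta}^\mu Q_\mu({\bf y})$ by the Hall-Littlewood multiplication rule from \cite{Mac}. Since the operators $x_m$ and $\partial_{y_m}$ all commute, the substitution ${\bf x}\to{\bf x}-\tilde{\partial}_{\bf y}$ is a ring homomorphism on polynomials in the $q_n({\bf x})$, so the same rule lifts to $Q_\kappa({\bf x}-\tilde{\partial}_{\bf y})Q_\nu({\bf x}-\tilde{\partial}_{\bf y}) = \sum_\lambda f_{\kappa\nu}^\lambda Q_\lambda({\bf x}-\tilde{\partial}_{\bf y})$. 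Rewriting $Q_\mu({\bf y}) = Q_\mu({\bf y}-\tilde{\partial}_{\bf x})\cdot 1$ and invoking Corollary~\ref{QQQ} then collapses the four-operator expression to $\sum_{\lambda,\mu} f_{\kappa\nu}^\lambda f_{\theta\eta}^\mu Q_{[\lambda,\mu]}$.

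For the first identity, I would use the generating-function representation of $Q_{[\lambda,\mu]}$ implicit in the proof of the preceding Proposition: $Q_{[\lambda,\mu]} = [z^\lambda w^\mu]$ of the product of the raising-operator-style contraction factors with the multiplication operators $\prod_i e^{\xi_t({\bf x},z_i)}\prod_a e^{\xi_t({\bf y},w_a)}$ applied to $1$. Written with disjoint formal variables, the pointwise product $Q_{[\kappa,\theta]}Q_{[\nu,\eta]}$ thereby carries only the two cross $XY$-families $(\kappa\theta)$ and $(\nu\eta)$, together with the four intra-block $XX$/$YY$ families. On the operator side, substituting $\sum q_n({\bf x}-\tilde{\partial}_{\bf y})z^n = e^{\xi_t({\bf x},z)}e^{-\xi_t(\tilde{\partial}_{\bf y},z)}$ and normal-ordering via $e^A e^B = e^B e^A e^{[A,B]}$ using the key commutator $[-\xi_t(\tilde{\partial}_{\bf y},z),\xi_t({\bf y},w)] = -\sum_n(1-t^n)^2 n^{-1}(zw)^n = \log\frac{(1-zw)(1-t^2zw)}{(1-tzw)^2}$ produces a factor $\frac{(1-zw)(1-t^2zw)}{(1-tzw)^2}$ for \emph{every} pair of an ${\bf x}$-variable and a ${\bf y}$-variable. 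Consequently, the operator product picks up all four $XY$ cross families $(\kappa\theta), (\nu\eta), (\kappa\eta), (\nu\theta)$---two more than $Q_{[\kappa,\theta]}Q_{[\nu,\eta]}$. Under the dictionary $D_{\kappa_i\eta_j}\leftrightarrow z_iw_a'$ (since $D_{\kappa_i\eta_j}$ simultaneously lowers $\kappa_i$ and $\eta_j$ by one, which under coefficient extraction corresponds to multiplication by $z_iw_a'$) and similarly $D_{\theta_m\nu_n}\leftrightarrow w_mz_n'$, the prefactor $\prod\frac{(1-tD_{\kappa_i\eta_j})^2}{(1-D_{\kappa_i\eta_j})(1-t^2D_{\kappa_i\eta_j})}\prod\frac{(1-tD_{\theta_m\nu_n})^2}{(1-D_{\theta_m\nu_n})(1-t^2D_{\theta_m\nu_n})}$ is precisely the reciprocal of the two superfluous cross families, so its application to the operator expression removes them and yields $Q_{[\kappa,\theta]}Q_{[\nu,\eta]}$.

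The main obstacle is this combinatorial bookkeeping: one must track which exponential commutators contribute nontrivially, confirm that the $R$- and $D$-operator/monomial dictionary is the correct one, and verify that no intra-block contraction is double-counted and that the commutation of $e^{-\xi_t(\tilde{\partial}_{\bf x},\cdot)}$ past the ${\bf x}$-multiplications supplies no extra factor beyond what is already accounted for. A sanity check in the small case $\kappa=\eta=(1)$, $\theta=\nu=\emptyset$---where the identity reduces to $(1-t)^2 x_1 y_1 = Q_{[(1),\emptyset]}Q_{[\emptyset,(1)]}$ and the prefactor $1+(1-t)^2 D_{\kappa_1\eta_1}+\cdots$ exactly cancels the $-(1-t)^2$ constant term of $Q_{[(1),(1)]}$---provides a useful check of signs and conventions before carrying out the general calculation.
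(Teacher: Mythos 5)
Your proposal is correct. The paper supplies no argument for this proposition beyond the phrase ``by a direct calculation,'' and the calculation you outline --- writing $Q_\kappa({\bf x}-\tilde{\partial}_{\bf y})Q_\nu({\bf x}-\tilde{\partial}_{\bf y})Q_{\theta}({\bf y}-\tilde{\partial}_{\bf x})Q_{\eta}({\bf y}-\tilde{\partial}_{\bf x})\cdot 1$ in generating-function form, normal-ordering with the contraction $\frac{(1-zw)(1-t^2zw)}{(1-tzw)^2}$ to see that all four cross families $(\kappa,\theta),(\nu,\eta),(\kappa,\eta),(\nu,\theta)$ appear while $Q_{[\kappa,\theta]}Q_{[\nu,\eta]}$ carries only the first two, and then cancelling the surplus via the dictionary $D_{\kappa_i\eta_j}\leftrightarrow z_iw_j'$, together with the ring-homomorphism argument and the Hall--Littlewood coefficients $f^\lambda_{\kappa\nu}$ for the second identity --- is exactly the direct calculation implicit in the paper's own vertex-operator proof of its first proposition, so the approaches coincide.
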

As in \cite{Mac}, we use $\lambda'$ to denote the transpose of the Young diagram $\lambda$ and $\theta=\lambda-\mu$ a skew diagram. If $|\theta|=n$ and $\theta'_i=1$ for each $i\geq 1$, we say $\theta$ is a horizontal $n$-strip.
Let
\be\label{psi11}
\psi_{\lambda/\mu}(t)=\prod_{j\in J_{\lambda/\mu}}(1-t^{m_j(\mu)})
\ee
where $\theta=\lambda-\mu$ is a horizontal strip, $m_i(\mu)$ denotes the number of times $i$ occurs as a part of $\mu$, and $J_{\lambda/\mu}$ is the set of integers $j\geq 1$ such that $\theta'_j<\theta'_{j+1}$.
We have
\be\label{qQ}
q_n({\bf x}) Q_\mu({\bf x})=\sum_\lambda \psi_{\lambda/\mu}(t) Q_\lambda({\bf x})
\ee
summed over all $\lambda\supset\mu$ such that $\lambda-\mu$ is a horizontal $n$-strip.

For a pair of Young diagrams $\lambda$ and $\mu$, the notation $\lambda\succ\mu$ or $\mu\prec\lambda$ means that the Young diagrams $\lambda$ and $\mu$ are interlaced, in the sense of $\lambda_1\geq\mu_1\geq\lambda_2\geq\mu_2\geq\cdots$. Let the skew Hall-Littlewood function $P_{\lambda/\mu}$ of a single variable $z$, indexed by the skew Young diagram $\lambda/\mu$, is
\be
P_{\lambda/\mu}(z,t)=\psi_{\lambda/\mu}(t)z^{|\lambda|-|\mu|},\quad \lambda\succ\mu
\ee
and $P_{\lambda/\mu}(z,t)=0$ otherwise.

\begin{prp}
The vertex operators in (\ref{g1}) and (\ref{g2}) act on the coupled Hall-Littlewood functions $Q_{[\lambda,\mu]}({\bf x},{\bf y})$ in the following way,
\bea
\Gamma_1^-(z)Q_{[\lambda,\mu]}({\bf x},{\bf y})&=&\sum_{\nu\succ\lambda}P_{\nu/\lambda} (z,t) Q_{[\nu,\mu]}({\bf x},{\bf y}),\\
\Gamma_2^-(z)Q_{[\lambda,\mu]}({\bf x},{\bf y})&=&\sum_{\nu\succ\mu} P_{\nu/\mu} (z,t)Q_{[\lambda,\nu]}({\bf x},{\bf y}),
\eea
where the sums are over all Young diagrams $\nu$.
\end{prp}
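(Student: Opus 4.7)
The plan is to reduce the claim to the classical Pieri-type rule (\ref{qQ}) via the factorization (\ref{QQQ}). First I would use the generating function (\ref{qr}) to expand
$\Gamma_1^-(z) = e^{\xi_t({\bf x}-\tilde{\partial}_{\bf y}, z)} = \sum_{n \geq 0} q_n({\bf x}-\tilde{\partial}_{\bf y})\, z^n$,
and then apply (\ref{QQQ}) to write $Q_{[\lambda,\mu]} = Q_\lambda({\bf x}-\tilde{\partial}_{\bf y})\, Q_\mu({\bf y}-\tilde{\partial}_{\bf x}) \cdot 1$. The problem then reduces to computing the operator product $q_n({\bf x}-\tilde{\partial}_{\bf y})\, Q_\lambda({\bf x}-\tilde{\partial}_{\bf y})$ for each $n$.

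The key step is a substitution principle. The operators $x_i - \frac{1}{i}\partial_{y_i}$ for distinct $i$ mutually commute, since $x_i$ and $x_j$ commute, $\partial_{y_i}$ and $\partial_{y_j}$ commute, and $x_i$ commutes with $\partial_{y_j}$ because they act on disjoint variable sets. Hence the assignment $x_i \mapsto x_i - \frac{1}{i}\partial_{y_i}$ extends to a ring homomorphism from $\Q(t)[x_1, x_2, \ldots]$ into the operator algebra, so every polynomial identity in the $x_i$'s transports to an operator identity after substitution. Applying this to (\ref{qQ}) yields $q_n({\bf x}-\tilde{\partial}_{\bf y})\, Q_\lambda({\bf x}-\tilde{\partial}_{\bf y}) = \sum_\nu \psi_{\nu/\lambda}(t)\, Q_\nu({\bf x}-\tilde{\partial}_{\bf y})$, summed over $\nu$ with $\nu/\lambda$ a horizontal $n$-strip.

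Composing on the right with $Q_\mu({\bf y}-\tilde{\partial}_{\bf x})$ and acting on $1$, the identity (\ref{QQQ}) lets me recognize each resulting term as $Q_{[\nu,\mu]}$. Summing over $n$ with the weight $z^n$ and using $P_{\nu/\lambda}(z,t) = \psi_{\nu/\lambda}(t)\, z^{|\nu|-|\lambda|}$ (the horizontal-strip condition being equivalent to $\nu \succ \lambda$) assembles the result into $\Gamma_1^-(z)\, Q_{[\lambda,\mu]} = \sum_{\nu \succ \lambda} P_{\nu/\lambda}(z,t)\, Q_{[\nu,\mu]}$, which is the first formula. The second formula for $\Gamma_2^-(z)$ follows from the evident $({\bf x},\lambda) \leftrightarrow ({\bf y},\mu)$ symmetry of the entire construction.

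The main delicate point is the substitution principle: one must justify why polynomial relations among $q_n({\bf x})$ and $Q_\lambda({\bf x})$ promote to operator relations after replacing ${\bf x}$ by ${\bf x} - \tilde{\partial}_{\bf y}$. Once the commutativity of the operators $x_i - \frac{1}{i}\partial_{y_i}$ is noted, this is routine, and the rest of the argument is just bookkeeping with the generating functions already set up in section \ref{sect2}.
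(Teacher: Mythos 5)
Your argument is correct and is exactly the route the paper intends: the proposition is stated there without an explicit proof, but it is placed immediately after the Pieri rule (\ref{qQ}) and the definition of $P_{\nu/\lambda}(z,t)$ precisely so that it follows from the factorization (\ref{QQQ}) together with the substitution ${\bf x}\mapsto{\bf x}-\tilde{\partial}_{\bf y}$ (legitimate because the operators $x_i-\frac{1}{i}\partial_{y_i}$ pairwise commute), just as you argue. The only point worth making explicit is that the $\Gamma_2^-(z)$ case needs the commutativity of $Q_\lambda({\bf x}-\tilde{\partial}_{\bf y})$ with $q_k({\bf y}-\tilde{\partial}_{\bf x})$ from Corollary \ref{qQQq} in order to bring $q_n({\bf y}-\tilde{\partial}_{\bf x})$ next to $Q_\mu({\bf y}-\tilde{\partial}_{\bf x})$, which is what your appeal to the $({\bf x},\lambda)\leftrightarrow({\bf y},\mu)$ symmetry amounts to.
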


\section{$q$-boson model and Hall-Littlewood functions}\label{sect3}
In this section, we consider the $q$-boson model based on the $q$-boson algebra\cite{PS,NVT}. The $q$-boson algebra is generated by three operators $B, B^\dag, N$ with commutation relations
\begin{equation}\label{nphi}
[N,B] = - B,\quad [N,B^{\dag}] = B^{\dag},\quad [B,B^{\dag}] = q^N.
\end{equation}
The representation of this algebra in the one-dimensional Fock space $\mathcal{F}$ consisting of $n$-particle states
$|n\rangle$ is as follows:
$$
B^{\dag}|n\rangle =|n+1\rangle,\quad B|n\rangle = [n] |n-1\rangle, \quad B|0\rangle = 0, \quad
N|n\rangle = n|n\rangle,
$$
where $|0\rangle$ is the vacuum state, the special case $n=0$ of the $n$ particle state, and we denote
\begin{equation}
[n]=\frac{1-q^n}{1-q},\quad [n]!=\prod_{j=1}^n[j].
\end{equation}
The scalar product is given by
\begin{equation}\label{scann}
\langle n|n\rangle=[n]!.
\end{equation}

When $q=1$, the $q$-boson operators become ordinary bosons $B\rightarrow b,\ B^\dag\rightarrow b^\dag$ which satisfy $[b,b^\dag]=1$.

Let the tensor product
\be
\mathcal{F} = \bigotimes_{i=0}^M \mathcal{F}_i
\ee
be $M+1$ copies of the Fock space. Denote by $B_i, B^{\dag}_i, N_i$ the operators that act as $B, B^{\dag}, N$ in (\ref{nphi}), respectively, in the $i$th space and identically in the other spaces. We call the $j$th Fock space $\mathcal{F}_j$ the $j$-energy space. Define the operator of the total number of particles by
$$
\hat{N} = \sum_{i=0}^M  N_i.
$$
Then the $N$-particle vectors in this space are of the form
\be
\bigotimes_{i=0}^M |n_i\rangle_i,\qquad \textrm{where} \ |n_i\rangle_i =
(B_i^{\dag})^{n_i}|0\rangle_i,\quad  N=\sum_{i=0}^M n_i,    \label{Npar}
\ee
which is denoted by $|\{n_i\}_{i=1}^M\rangle$. Using the equation (\ref{scann}),
we get
\be
\langle\{n_i\}_{i=1}^M|\{n_i\}_{i=1}^M\rangle=\prod_{j=0}^M[n_j]!
\ee

The $q$-boson model is a model of a chain with the hamiltonian
\be
H = -\frac{1}{2} \sum_{i=0}^M \big(B^{\dag}_i B_{i+1} + B_i B^{\dag}_{i+1} -2N_i \big),
\ee
for the periodic boundary conditions: $M+1\equiv 1$.
Note that the limit at $q=0$  of $q$-boson model is the phase model.

Introduce the $L$-matrix for the $q$-boson model
$$
L_i(u) = \left( \begin{array}{cc}
u^{-1} & B^{\dag}_i \\
(1-q)B_i & u
\end{array} \right), \qquad i=0,\ldots, M,
$$
where $u$ is a scalar parameter, here we treat $u$ as $uI$ with $I$ being the identity operator in $\mathcal F$. For every $i=0,\ldots, M$, the $L$-matrix satisfies the bilinear equation
\be\label{rll}
R(u,v)\big(L_i(u)\otimes L_i(v)  \big)  =  \big(L_i(v)\otimes L_i(u)  \big) R(u,v),
\ee
where $R$-matrix $R(u,v)$ is a $4\times4$ matrix given by
\be
R(u,v) = \left( \begin{array}{cccc}
f(v,u) & 0 & 0 & 0 \\
0 & g(v,u) & q^{-1/2} & 0 \\
0 & q^{1/2} & g(v,u) & 0 \\
0 & 0 & 0 & f(v,u)
\end{array} \right),   \label{Rmatrix}
\ee
with
$$
f(v,u) = \frac{q^{-1/2}u^2-q^{1/2}v^2}{u^2-v^2},\quad g(v,u) = \frac{uv( q^{-1/2}- q^{1/2})}{u^2-v^2}.
$$

Define the monodromy matrix by
$$
T(u) = L_M(u) L_{M-1}(u)\cdots L_0(u)
$$
which gives the solution of the $q$-boson model. It also satisfies the bilinear equation
\be\label{rtt}
R(u,v)\big(T(u)\otimes T(v)  \big)  = \big(T(v)\otimes T(u)  \big) R(u,v).
\ee

The objects we consider in the following are entriees of the monodromy matrix, which are denoted by
\be
T(u) =  \left( \begin{array}{cc}
A(u) & B(u) \\
C(u) & D(u)
\end{array} \right),    \label{ABCD}
\ee
the most important relations of (\ref{rtt}) are
\begin{eqnarray}
q^{-1/2}A(u)B(v)&=&f(u,v)B(v)A(u)+g(v,u)B(u)A(v),\\
q^{-1/2}D(u)B(v)&=&f(v,u)B(v)D(u)+g(u,v)B(u)D(v),\label{ytt2}\\
C(u)B(v)-qB(v)C(u)&=&q^{1/2}g(u,v)(A(u)D(v)-A(v)D(u)).\label{ytt3}
\end{eqnarray}

Since
\be
\hat{N} B(u) = B(u) (\hat{N}+1), \qquad \hat{N} C(u) = C(u) (\hat{N}-1).   \label{creat-annih}
\ee
we call $B(u)$ the creation operator and $C(u)$ the annihilation operator. The  operators $A(u)$ and $D(u)$ do not change the total number of particles.

Denote by $|0\rangle_j$ the vacuum vector in $\mathcal F_j$ and by $|0\rangle=\otimes_{i=0}^M |0\rangle_i$. Let
\be
|\Psi(u_1,\ldots,u_N) \rangle = \prod_{i=1}^N B(u_j) |0\rangle,   \label{Psi}
\ee
which is a $N$ particle state.

According to \cite{PS,NVT}, there is the following isometry between the states (\ref{Npar}) and the Hall-Littlewood functions
\begin{equation}\label{map}
\bigotimes_{i=0}^{M} |n_i\rangle_i\mapsto |\lambda\rangle=Q_{\lambda}({\bf x}),\quad \lambda=1^{n_1}2^{n_2}\ldots
\end{equation}
and the operator $B(u)$ acts on $Q_{\lambda}({\bf x})$ as the operator of multiplication by $u^{-M}H_M(u^2)$, where $H_M(t)=\sum_{k=0}^Mt^kq_k({\bf x})$ is the truncated generating function of $q_k({\bf x})$ defined in (\ref{qr}). Then the state $|\Psi(u_1,\ldots,u_N) \rangle$ has the following expansion
\begin{equation}
|\Psi(u_1,\ldots,u_N) \rangle=\sum_\lambda P_\lambda(u_1^2,\cdots,u_N^2)|\{n_i\}_{i=1}^M\rangle,\quad \lambda=1^{n_1}2^{n_2}\cdots
\end{equation}
where $P_\lambda({\bf x})$ is the Hall-Littlewood function defined in \cite{Mac}.

By forgetting $B_0^+$ and $(1-q)B_0$, we find that the operator $C(u)$ acts on $Q_{\lambda}({\bf x})$ as the operator $u^{-M}H_M^\bot(u^2)$, where $H_M^\bot(t)=\sum_{k=0}^Mt^kq_k^\bot({\bf x})$ and $q_k^\bot({\bf x})$ is adjoint to $q_k({\bf x})$ with respect to scalar product $(P_\lambda({\bf x}),Q_\mu({\bf x}))=\delta_{\lambda,\mu}$ in the space $\Lambda_M$ generated by Hall-Littlewood functions $Q_\lambda({\bf x})$ whose diagrams have at most $M$ columns. Then the state $\langle\Psi(u_1,\ldots,u_N) |$ has the following expansion
\begin{equation}
\langle\Psi(u_1,\ldots,u_N) |=\sum_\lambda\langle\{n_i\}_{i=1}^M| P_\lambda(u_1^{-2},\cdots,u_N^{-2}),\quad \lambda=1^{n_1}2^{n_2}\cdots.
\end{equation}

In the following, we will construct the realization of the two-site generalized $q$-boson model in the algebra of coupled Hall-Littlewood functions.
\section{Two-site generalized $q$-boson model and coupled Hall-Littlewood functions}\label{sect4}
In this section, let the coupled Hall-Littlewood functions be in $\C(t)[{\bf x},{\bf y}]$. For two positive integers $M_1,M_2$, we consider the $M_1+M_2+2$ copies of the Fock space of $q$-boson algebra \be\mathcal{F}=\mathcal{F}^{(1)}\otimes \mathcal{F}^{(2)},\ee where the
 tensor products
\be
\mathcal{F}^{(1)} = \bigotimes_{i=0}^M \mathcal{F}_i^{(1)}\quad\quad \mathcal{F}^{(2)} = \bigotimes_{i=0}^M \mathcal{F}_i^{(2)},
\ee
 are $M_1+1$ and $M_2+1$ copies of the Fock space of $q$-boson algebra respectively. We use $B_i^{(1)}, B^{(1)\dag}_i, N_i^{(1)}$ to denote the operators that act as $B, B^{\dag}, N$ in (\ref{nphi}), respectively, in the $i$th space $\mathcal{F}_i^{(1)}$ and identically in the other spaces of $\mathcal{F}^{(1)}$ and in all spaces of $\mathcal{F}^{(2)}$, and $B_i^{(2)}, B^{(2)\dag}_i, N_i^{(2)}$ to denote the operators that act as $B, B^{\dag}, N$, respectively, in the $i$th space $\mathcal{F}_i^{(2)}$ and identically in the other spaces of $\mathcal{F}^{(2)}$ and in all spaces of $\mathcal{F}^{(1)}$.
 We denote  the vacuum vector in $\mathcal F_j^{(i)}$ by $|0\rangle_j^{(i)}$ for $i=1,2.$

The operator of the total number of particles is given by
\be
\hat N=\hat{N}_1+\hat{N}_2
\ee
where the operators
$$
\hat{N}_1 = \sum_{i=0}^M  N_i^{(1)}, \quad \hat{N}_2 = \sum_{i=0}^M  N_i^{(2)}.
$$
Then the $(N_1,N_2)$-particle vectors in space $\mathcal{F}$ are of the form
\be\label{nmbasis}
\bigotimes_{i=0}^{M_1} |n_i\rangle_i^{(1)}\bigotimes\bigotimes_{i=0}^{M_2} |m_i\rangle_i^{(2)},\qquad \textrm{with}\quad N_1=\sum_{i=0}^{M_1} n_i,\quad N_2=\sum_{i=0}^{M_2} m_i,
\ee
where
\[
|n_i\rangle_i^{(1)}= (B_i^{(1)\dag})^{(n_i)}|0\rangle_j^{(1)},\quad |m_i\rangle_i^{(2)}= (B_i^{(2)\dag})^{(m_i)}|0\rangle_j^{(2)}.
\]

Define the map
$\jmath: \mathcal F\rightarrow \C(t)[{\bf x}, {\bf y}]$
by
\be\label{jmath}
\jmath(\bigotimes_{i=0}^{M_1} |n_i\rangle_i^{(1)}\bigotimes\bigotimes_{i=0}^{M_2} |m_i\rangle_i^{(2)})= Q_{[\lambda,\mu]}({\bf x},{\bf y})
\ee
with
\be
\lambda=1^{n_1}2^{n_2}\ldots,\quad \mu=1^{m_1}2^{m_2}\ldots.
\ee
Note that partitions $\lambda,\ \mu$ themselves have no relations with numbers $n_0$ and $m_0$ of particles. Nonetheless, if we fix the total numbers of particles $N_1$ and $N_2$, we can deduce $n_0=N_1-l(\lambda)$ and $m_0=N_2-l(\mu)$, where $l(\lambda)$ is the length of partition $\lambda$, that is, the number of rows in $\lambda$.
Therefore the map (\ref{jmath}) gives a representation of
\be
\hat {\mathcal F}=\bigotimes_{i=1}^M \mathcal{F}_i^{(1)}\bigotimes \bigotimes_{i=1}^M \mathcal{F}_i^{(2)}.
\ee
in the algebra of $\C(t)[{\bf x},{\bf y}]$, in fact in its subspace $\C(t)_{M_1,M_2}[{\bf x},{\bf y}]$ generated by coupled Hall-Littlewood functions $Q_{[\lambda,\mu]}({\bf x},{\bf y})$ where the Young diagrams $\lambda$ have at most $M_1$ columns and $\mu$ at most $M_2$ columns.
The correspondence (34) in \cite{PS} is a special case of the map $\jmath$ defined above.

The monodromy matrix is given by
\bea
T(u)&=&L^{(2)}_{M_2}(u) \cdots L^{(2)}_0(u)L^{(1)}_{M_1}(u) \cdots L^{(1)}_0(u),\nonumber
\eea
where the $L$-matrices
$$
L^{(1)}_i(u) = \left( \begin{array}{cc}
u^{-1} & B^{(1)\dag}_i \\
(1-q)B_i^{(1)} & u
\end{array} \right), \qquad i=0,\ldots, M_1,
$$
$$
L^{(2)}_i(u) = \left( \begin{array}{cc}
u^{-1} & B^{(2)\dag}_i \\
(1-q)B_i^{(2)} & u
\end{array} \right), \qquad i=0,\ldots, M_2.
$$
From the relation (\ref{rll}), we get that each $L$-matrix and the monodromy matrix satisfy the bilinear equation again
\bea
R(u,v)\big(L^{(j)}_i(u)\otimes L^{(j)}_i(v)  \big) & = & \big(L^{(j)}_i(v)\otimes L^{(j)}_i(u)  \big) R(u,v),\quad j=1,2,\nonumber \\
R(u,v)\big(T(u)\otimes T(v)  \big) & = & \big(T(v)\otimes T(u)  \big) R(u,v), \label{intertwine}
\eea
with the same $R$-matrix in (\ref{Rmatrix}).

Let
\bea
T_i(u)&=&L^{(i)}_{M_i}(u)L^{(i)}_{M_i-1}(u) \cdots L^{(i)}_0(u)\\
&=&\left( \begin{array}{cc}
A_i(u) & B_i(u) \\
C_i(u) & D_i(u)
\end{array} \right)\quad i=1,2.
\eea
We call the operators $B_1(u)$ and $B_2(u)$ the creation operators and $C_1(u)$ and $C_2(u)$ the annihilation operators since
\be
\hat{N}_i B_i(u) = B_i(u) (\hat{N}_i+1), \qquad \hat{N}_i C_i(u) = C_i(u) (\hat{N}_i-1) \quad\text{for}\quad i=1,2.
\ee
The operators $A_i(u)$ and $D_i(u)\ (i=1,2)$   do not change the total number of particles.

Consider the decomposition
\be
\mathcal F=\bigoplus_{N_1,N_2\geq 0}\mathcal F^{N_1,N_2}=\bigoplus_{N_1,N_2\geq 0}\mathcal F_1^{N_1}\otimes\mathcal F_2^{N_2}
\ee
of the whole space $\mathcal {F}$ into $(N_1,N_2)$-particle subspaces $\mathcal{F}^{N_1,N_2}$.
Let $\C(t)_{M_1,M_2}^{N_1,N_2}[{\bf x},{\bf y}]$ be the space of coupled Hall-Littlewood functions corresponding to $\mathcal F^{N_1,N_2}$. We find that $\C(t)_{M_1,M_2}^{N_1,N_2}[{\bf x},{\bf y}]$ is spanned by coupled Hall-Littlewood functions $Q_{[\lambda,\mu]}({\bf x},{\bf y})$ whose diagrams $\lambda$ lie in the $N_1\times M_1$ box and $\mu$ lie in the $N_2\times M_2$ box. That Young diagram $\lambda$ lies in the $N\times M$ box means $\lambda$ has at most $N$ rows and at most $M$ columns.

 We know that $B_i(u)$ are creation operators. In the following, we discuss the actions of ${B}_i(u)$ on $\hat {\mathcal F}=\C(t)_{M_1,M_2}[{\bf x},{\bf y}]$. Define the projection $P:\mathcal F\rightarrow \hat{\mathcal F}$ by forgetting the zero energy states. Thus the operators we considered are $PB_i(u)P$, we still denote it by $B_i(u)$. Then  ${B}_1(u)$ sends $\C(t)_{M_1,M_2}^{N_1,N_2}[{\bf x},{\bf y}]$ to $\C(t)_{M_1,M_2}^{N_1+1,N_2}[{\bf x},{\bf y}]$ and $B_2(u)$ sends $\C(t)_{M_1,M_2}^{N_1,N_2}[{\bf x},{\bf y}]$ to $\C(t)_{M_1,M_2}^{N_1,N_2+1}[{\bf x},{\bf y}]$.

Define $\tilde{B}_i(u)$ by $B_i(u)=u^{-M_i}\tilde{B}_i(u)$ for $i=1,2$.

\begin{prp}\label{prpbb}
In the space $\C(t)_{M_1,M_2}[{\bf x},{\bf y}]$,
\bea
\tilde{B}_1(u)&=& H_{M_1}({\bf x}-\tilde{\partial}_{\bf y},u^2),\\
\tilde{B}_2(u)&=& H_{M_2}({\bf y}-\tilde{\partial}_{\bf x},u^2),
\eea
where $H_n({\bf x},z)=\sum_{k=0}^n z^kq_k({\bf x})$, and $q_k({\bf x})$ is defined in (\ref{qr}).
\end{prp}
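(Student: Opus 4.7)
The plan is to compute the action of $\tilde{B}_1(u)$ on the basis $\{Q_{[\lambda,\mu]}({\bf x},{\bf y})\}$ by reducing to the pure Hall--Littlewood case treated in Section \ref{sect3}, and then to match the result with multiplication by $H_{M_1}({\bf x}-\tilde{\partial}_{\bf y},u^2)$ using Corollary \ref{QQQ}.

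First I would observe that $B_1(u)$ is the $(1,2)$-entry of $T_1(u)=L^{(1)}_{M_1}(u)\cdots L^{(1)}_0(u)$, and every $L^{(1)}_i(u)$ only involves the $q$-boson operators $B_i^{(1)},\,B_i^{(1)\dag}$, all of which act nontrivially only on $\mathcal{F}^{(1)}$. Consequently $\tilde{B}_1(u)$ acts as the identity on $\mathcal{F}^{(2)}$, and on a basis vector $|\lambda;\mu\rangle:=\bigotimes_i|n_i\rangle^{(1)}_i\otimes\bigotimes_j|m_j\rangle^{(2)}_j$ it reduces to $\bigl(\tilde{B}_{\mathrm{pure}}(u)|\lambda\rangle^{(1)}\bigr)\otimes|\mu\rangle^{(2)}$, where $\tilde{B}_{\mathrm{pure}}(u)$ denotes the corresponding operator on the single factor $\mathcal{F}^{(1)}$ with chain length $M_1$. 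By the result of \cite{PS,NVT} recalled in Section \ref{sect3}, $\tilde{B}_{\mathrm{pure}}(u)$ corresponds under $|\lambda\rangle^{(1)}\mapsto Q_\lambda({\bf x})$ to multiplication by $H_{M_1}({\bf x},u^2)=\sum_{k=0}^{M_1}u^{2k}q_k({\bf x})$. Expanding via the Pieri identity (\ref{qQ}) gives $\tilde{B}_{\mathrm{pure}}(u)|\lambda\rangle^{(1)}=\sum_\nu c_\nu(u,t)|\nu\rangle^{(1)}$ for explicit coefficients $c_\nu(u,t)$, whence $\jmath\bigl(\tilde{B}_1(u)|\lambda;\mu\rangle\bigr)=\sum_\nu c_\nu(u,t)\,Q_{[\nu,\mu]}({\bf x},{\bf y})$.

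Next I would compute $H_{M_1}({\bf x}-\tilde{\partial}_{\bf y},u^2)\,Q_{[\lambda,\mu]}({\bf x},{\bf y})$ using Corollary \ref{QQQ}, which writes $Q_{[\lambda,\mu]}=Q_\lambda({\bf x}-\tilde{\partial}_{\bf y})Q_\mu({\bf y}-\tilde{\partial}_{\bf x})\cdot 1$. Since the shifted variables $x_k-\tfrac1k\partial_{y_k}$ commute among themselves (as $[x_k,\partial_{y_\ell}]=0$), any two operator-polynomials in ${\bf x}-\tilde{\partial}_{\bf y}$ commute, and the Hall--Littlewood Pieri rule transfers verbatim to expressions in these shifted variables. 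Therefore $H_{M_1}({\bf x}-\tilde{\partial}_{\bf y},u^2)\,Q_\lambda({\bf x}-\tilde{\partial}_{\bf y})=\sum_\nu c_\nu(u,t)Q_\nu({\bf x}-\tilde{\partial}_{\bf y})$ with exactly the same coefficients $c_\nu(u,t)$ as above. Applying this operator to $Q_\mu({\bf y}-\tilde{\partial}_{\bf x})\cdot 1$ and re-invoking Corollary \ref{QQQ} termwise yields $\sum_\nu c_\nu(u,t)Q_{[\nu,\mu]}({\bf x},{\bf y})$, matching the expression from the previous paragraph. The identity for $\tilde{B}_2(u)$ then follows by the symmetric argument under the interchange $({\bf x},\mathcal{F}^{(1)})\leftrightarrow({\bf y},\mathcal{F}^{(2)})$.

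The main technical point is justifying that $Q_\lambda({\bf x}-\tilde{\partial}_{\bf y})$ and $H_{M_1}({\bf x}-\tilde{\partial}_{\bf y},u^2)$ are well-defined commuting operators on $\C(t)[{\bf x},{\bf y}]$ and that the Pieri rule (\ref{qQ}) survives the substitution ${\bf x}\mapsto {\bf x}-\tilde{\partial}_{\bf y}$; this hinges on the commutativity $[x_k-\tfrac1k\partial_{y_k},\,x_\ell-\tfrac1\ell\partial_{y_\ell}]=0$, which is immediate, together with the polynomial nature of the expansions involved. Once this is in place, the remaining steps are direct verification.
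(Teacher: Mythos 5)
Your proposal is correct and takes essentially the same route as the paper's own proof: both reduce $\tilde{B}_i(u)$ to its action on a single tensor factor, invoke the pure Hall--Littlewood result of \cite{PS,NVT}, and transfer it to the coupled setting via the factorization $Q_{[\lambda,\mu]}({\bf x},{\bf y})=Q_\lambda({\bf x}-\tilde{\partial}_{\bf y})Q_\mu({\bf y}-\tilde{\partial}_{\bf x})\cdot 1$ of (\ref{QQQ}). You merely spell out the point the paper leaves implicit, namely that the Pieri rule survives the substitution ${\bf x}\mapsto{\bf x}-\tilde{\partial}_{\bf y}$ and that the two shifted families commute (which is Corollary \ref{qQQq}, used by the paper for the $\tilde{B}_2$ case).
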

\begin{proof}
From the results in \cite{PS,NVT} and the equation (\ref{QQQ}), we have
\bea
&&\tilde{B}_1(u)\bigotimes_{i=0}^{M_1} |n_i\rangle_i^{(1)}\bigotimes\bigotimes_{i=0}^{M_2} |m_i\rangle_i^{(2)}\\
&=&\big(\tilde{B}_1(u)\bigotimes_{i=0}^{M_1} |n_i\rangle_i^{(1)}\big)\bigotimes\bigotimes_{i=0}^{M_2} |m_i\rangle_i^{(2)}\nonumber\\
&=& \sum_{k=0}^{M_1}u^{2k}q_k({\bf x}-\tilde{\partial}_{\bf y}) Q_\lambda({\bf x}-\tilde{\partial}_{\bf y})Q_\mu({\bf y}-\tilde{\partial}_{\bf x})\cdot 1,\nonumber
\eea
and
\bea
&&\tilde{B}_2(u)\bigotimes_{i=0}^{M_1} |n_i\rangle_i^{(1)}\bigotimes\bigotimes_{i=0}^{M_2} |m_i\rangle_i^{(2)}\\
&=&\bigotimes_{i=0}^{M_1} |n_i\rangle_i^{(1)}\bigotimes\big(\tilde{B}_2(u)\bigotimes_{i=0}^{M_2} |m_i\rangle_i^{(2)}\big)\nonumber\\
&=&Q_\lambda({\bf x}-\tilde{\partial}_{\bf y})\big(\sum_{k=0}^{M_2}u^{2k}q_k({\bf y}-\tilde{\partial}_{\bf x})Q_\mu({\bf y}-\tilde{\partial}_{\bf x})\big)\cdot 1\nonumber\\
&=& \sum_{k=0}^{M_2}u^{2k}q_k({\bf y}-\tilde{\partial}_{\bf x}) Q_\lambda({\bf x}-\tilde{\partial}_{\bf y})Q_\mu({\bf y}-\tilde{\partial}_{\bf x})\cdot 1,\nonumber
\eea
here the mapping signs $\jmath$ are omitted on the left hand sides of the equations.
\end{proof}

By forgetting $B_0^+$ and $(1-q)B_0$, we obtain the following formulas:
\begin{lem}
The operators ${A}_i(u),\ B_i(u),\ {C}_i(u),\ {D}_i(u),\ (i=1,2)$ are related by
\[
{A}_i(u)=u^{-1}B_i(u),\ {C}_i(u)=B^{\dag}_i(u^{-1}),\ {D}_i(u)=uB^{\dag}_i(u^{-1}).
\]
\end{lem}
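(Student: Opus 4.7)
The plan is to pass the ``forgetting'' procedure through the monodromy product and read off the three identities from the resulting block structure. Concretely, I would factor $T_i(u) = T'_i(u)\,L^{(i)}_0(u)$ with $T'_i(u) = L^{(i)}_{M_i}(u)\cdots L^{(i)}_1(u)$; the factor $T'_i(u)$ is untouched by the forgetting, since $B_0^{(i)\dag}$ and $(1-q)B_0^{(i)}$ appear only in $L^{(i)}_0(u)$. After replacing the offending off-diagonal entries of $L^{(i)}_0(u)$ according to the forgetting prescription, carrying out the $2\times 2$ multiplication immediately produces the identities $A_i(u)=u^{-1}B_i(u)$ and $D_i(u)=uC_i(u)$, because the two rows of the reduced $L^{(i)}_0(u)$ differ only by an overall factor of $u$. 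Thus identities one and three are essentially automatic from linear algebra.

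The nontrivial identity is $C_i(u)=B_i^\dag(u^{-1})$. I would establish it by combining the $L$-matrix symmetry $L^{(i)}_j(u)^\dag = \sigma\, L^{(i)}_j(u^{-1})\,\sigma$ (where $\sigma$ is the $2\times 2$ flip matrix) with Proposition \ref{prpbb}. In the coupled Hall-Littlewood representation, $B_i(u)$ acts as multiplication by $u^{-M_i}H_{M_i}(\cdot\,,u^2)$, and the adjoint in the standard Hall-Littlewood scalar product sends $q_k\mapsto q_k^\bot$; together these imply that $B_i^\dag(u^{-1})$ reduces to exactly the $q_k^\bot$-truncated expansion describing $C_i(u)$ after the forgetting.

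The main obstacle is tracking how the adjoint interacts with the non-commuting $L$-factors and with the substitution $u\to u^{-1}$. I would bypass this by verifying the operator identity directly on the coupled Hall-Littlewood basis $Q_{[\lambda,\mu]}({\bf x},{\bf y})$: Proposition \ref{prpbb} already pins down the action of $B_i(u)$, and then the actions of $A_i(u)$, $D_i(u)$, and $C_i(u)$ can be read off from the matrix structure together with the standard Hall-Littlewood adjoint computation, giving all three identities at once.
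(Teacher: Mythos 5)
Your first step is sound and is essentially what the paper's phrase ``by forgetting $B_0^{\dag}$ and $(1-q)B_0$'' amounts to: writing $T_i(u)=T'_i(u)L^{(i)}_0(u)$ with $T'_i(u)=L^{(i)}_{M_i}(u)\cdots L^{(i)}_1(u)$ and replacing the off-diagonal entries of $L^{(i)}_0(u)$ by $1$ leaves the rank-one matrix $\left(\begin{smallmatrix} u^{-1} & 1\\ 1 & u\end{smallmatrix}\right)$, whose second \emph{column} is $u$ times the first. (It is the columns, not the rows, that matter here, since $L^{(i)}_0$ sits on the right of the product; the reduced matrix happens to be symmetric, so your statement still leads to the correct conclusion.) This gives $B_i(u)=uA_i(u)$ and $D_i(u)=uC_i(u)$, i.e.\ the first and third identities modulo the second.

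The gap is in the identity $C_i(u)=B_i^{\dag}(u^{-1})$. The symmetry you invoke, $L^{(i)}_j(u)^{\dag}=\sigma L^{(i)}_j(u^{-1})\sigma$, is false as written: the off-diagonal entries of $L^{(i)}_j(u)$ are $B_j^{\dag}$ and $(1-q)B_j$, which are not adjoint to each other in the Fock scalar product $\langle n|n\rangle=[n]!$, so the $(1-q)$ factors do not match. What is true is that each $L^{(i)}_j(u)$ is self-adjoint for the \emph{modified} adjoint $B^{\dag}\mapsto(1-q)B$, $B\mapsto(1-q)^{-1}B^{\dag}$ (the $q$-boson counterpart of $q_k\mapsto q_k^{\bot}$ recorded in the corollary after Lemma \ref{ccc}), and that, separately, $\sigma L^{(i)}_j(u^{-1})\sigma=L^{(i)}_j(u)^{T}$. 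You need \emph{both} facts: the adjoint reverses the order of the product $L^{(i)}_{M_i}\cdots L^{(i)}_1$, and it is the flip-plus-inversion symmetry that restores the order, yielding $c(u)=b(u^{-1})^{\dag}$ and $d(u)=a(u^{-1})^{\dag}$ for the entries of $T'_i(u)$ and hence $C_i(u)=c(u)u^{-1}+d(u)=\bigl(a(u^{-1})+u^{-1}b(u^{-1})\bigr)^{\dag}=B_i^{\dag}(u^{-1})$. Your proposed bypass --- verifying the identity on the coupled Hall--Littlewood basis --- does not close this gap, because it presupposes that $C_i(u)$ acts as the truncated $q_k^{\bot}$-expansion; in the paper that description of $C_i(u)$ (Lemma \ref{ccc}) is \emph{deduced from} the very relation $C_i(u)=B_i^{\dag}(u^{-1})$ you are trying to prove, so the argument as proposed is circular. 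Either carry out the two-symmetry computation just sketched, or compute the action of $C_i(u)$ on the basis vectors (\ref{nmbasis}) from scratch as in the one-site treatment of \cite{NVT}; as it stands the key identity is asserted rather than proved.
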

Since $B_1(u)=u^{-M_1}H_{M_1}({\bf x}-\tilde{ \partial}_{\bf y}, u^2)$ and $B_2(u)=u^{-M_2}H_{M_2}({\bf y}-\tilde{ \partial}_{\bf x}, u^2)$, we obtain the following lemma.
\begin{lem}\label{ccc} Let $\tilde{{C}}_i(u)=u^{-M_i}{C}_i(u)$ for $i=1,2$, we have
\[
\tilde{{C}}_1(u)= H_{M_1}^\bot({\bf x}-\tilde{\partial}_{\bf y},u^{-2}),
\]
\[
\tilde{{C}}_2(u)= H_{M_2}^\bot({\bf y}-\tilde{\partial}_{\bf x},u^{-2}),
\]
where $H_{M}^\bot({\bf x},t)=\sum_{k=0}^{M}t^kq_k^{\bot}({\bf x})$, and $q_k^{\bot}({\bf x})$ is adjoint to the operator of multiplication by $q_k({\bf x})$.
\end{lem}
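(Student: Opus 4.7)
The plan is to obtain the formulas for $\tilde{C}_i(u)$ by composing the two results that immediately precede the statement: the preceding lemma, which identifies $C_i(u)$ with $B_i^\dag(u^{-1})$, and Proposition \ref{prpbb}, which gives the explicit expansion of $\tilde{B}_i(u)$ as a polynomial in $u^{2}$ whose coefficients are the operators $q_k({\bf x}-\tilde{\partial}_{\bf y})$ (or $q_k({\bf y}-\tilde{\partial}_{\bf x})$). The strategy is thus purely formal: expand $B_i(u)$ in powers of $u$, substitute $u\mapsto u^{-1}$, take adjoints term by term, and finally strip the prefactor $u^{-M_i}$.

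Concretely, first I would write out, using Proposition \ref{prpbb} together with the definition $B_i(u)=u^{-M_i}\tilde{B}_i(u)$,
\[
B_1(u)=u^{-M_1}\sum_{k=0}^{M_1}u^{2k}\,q_k({\bf x}-\tilde{\partial}_{\bf y}),\qquad B_2(u)=u^{-M_2}\sum_{k=0}^{M_2}u^{2k}\,q_k({\bf y}-\tilde{\partial}_{\bf x}).
\]
Replacing $u$ by $u^{-1}$ gives $B_i(u^{-1})=u^{M_i}\sum_{k=0}^{M_i}u^{-2k}\,q_k(\cdot)$. Next I would invoke the preceding lemma to rewrite $C_i(u)=B_i^{\dag}(u^{-1})$, so that
\[
C_i(u)=u^{M_i}\sum_{k=0}^{M_i}u^{-2k}\,q_k^{\bot}(\cdot),
\]
the adjoint being taken coefficient by coefficient since the scalar factors $u^{\pm 2k}$ pass through $\dag$. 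Multiplying by $u^{-M_i}$ yields exactly $\tilde{C}_i(u)=H_{M_i}^{\bot}(\cdot,u^{-2})$ in each case.

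The only non-routine point in this plan is the justification of the term-by-term adjoint, namely that the adjoint of the operator $q_k({\bf x}-\tilde{\partial}_{\bf y})$ acting on $\C(t)_{M_1,M_2}[{\bf x},{\bf y}]$ is indeed $q_k^{\bot}({\bf x}-\tilde{\partial}_{\bf y})$ (and similarly on the ${\bf y}$-side). This is where Corollary \ref{qQQq} comes into play: the commutation of $q_k({\bf y}-\tilde{\partial}_{\bf x})$ with $Q_\lambda({\bf x}-\tilde{\partial}_{\bf y})$ shows that the ${\bf x}$-part and ${\bf y}$-part of the operators act independently, so the scalar product on $\C(t)_{M_1,M_2}[{\bf x},{\bf y}]$ factorises as a tensor product of two copies of the standard Hall--Littlewood pairing of \cite{Mac,PS,NVT}. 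Under this factorisation, the relation $(q_k\cdot f, g)=(f, q_k^{\bot}\cdot g)$ from the uncoupled setting transfers directly to the coupled setting, which legitimises the termwise adjoint above. Everything else is bookkeeping with powers of $u$, so no further obstacle arises.
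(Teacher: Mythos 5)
Your proposal is correct and follows essentially the same route as the paper: the paper derives Lemma \ref{ccc} in one line by combining the preceding lemma $C_i(u)=B_i^{\dag}(u^{-1})$ with the expansion $B_i(u)=u^{-M_i}H_{M_i}(\cdot,u^2)$ from Proposition \ref{prpbb} and taking adjoints termwise with respect to the pairing $(P_\lambda,Q_\mu)=\delta_{\lambda\mu}$ on $\Lambda_M$. Your additional justification of the termwise adjoint via the factorisation of the scalar product is a welcome elaboration of a step the paper leaves implicit, but it is not a different argument.
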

Note that the operators $q_k({\bf x})$ and $q_k^{\bot}({\bf x})$ are adjoint to each other in the lemma above with respect to the scalar product $(P_\lambda({\bf x}),Q_\lambda({\bf x}))=\delta_{\lambda\mu}$ in the space $\Lambda_{M}$ spanned by the Hall-Littlewood function $Q_\lambda({\bf x})$ whose diagrams $\lambda$ have at most $M$ columns.
\begin{cor}
The map $\bot:q_k({\bf x})\rightarrow q_k^{\bot}({\bf x})$ in the algebra of symmetric functions corresponds to the map $+:B^+\rightarrow (1-q)B, \ B\rightarrow\frac{1}{1-q}B^+$ in $q$-boson model.
\end{cor}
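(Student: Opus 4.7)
The plan is to derive the corollary as a direct coefficient comparison from the two operator identities already established just above it: Proposition \ref{prpbb} giving $\tilde B_i(u)$ in terms of the $q_k$, and Lemma \ref{ccc} giving $\tilde C_i(u)$ in terms of the $q_k^\bot$, tied together by the relation $C_i(u) = B_i^\dag(u^{-1})$ from the preceding lemma (which itself encodes the $+$ involution).

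First I would do the rescaling bookkeeping. Using $B_i(u) = u^{-M_i}\tilde B_i(u)$ and $\tilde C_i(u) = u^{-M_i} C_i(u)$, so that $B_i(u^{-1}) = u^{M_i}\tilde B_i(u^{-1})$, the identity $C_i(u) = B_i^\dag(u^{-1})$ rewrites cleanly as
\[
\tilde C_i(u) \;=\; +\bigl[\tilde B_i(u^{-1})\bigr],
\]
where $+$ denotes the involution $B^\dag\mapsto (1-q)B$, $B\mapsto(1-q)^{-1}B^\dag$ of the $q$-boson algebra extended to polynomials in the generators.

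Next I would expand both sides under $\jmath$ and match coefficients. By Proposition \ref{prpbb}, the right-hand side acts as
\[
\tilde B_i(u^{-1}) \;=\; H_{M_i}(\bullet,u^{-2}) \;=\; \sum_{k=0}^{M_i} u^{-2k}\, q_k(\bullet),
\]
where $\bullet$ stands for ${\bf x}-\tilde\partial_{\bf y}$ when $i=1$ and ${\bf y}-\tilde\partial_{\bf x}$ when $i=2$. By Lemma \ref{ccc}, the left-hand side acts as
\[
\tilde C_i(u) \;=\; H^\bot_{M_i}(\bullet,u^{-2}) \;=\; \sum_{k=0}^{M_i} u^{-2k}\, q_k^\bot(\bullet).
\]
Since the involution $+$ is applied termwise (it does not act on the scalar $u^{-2k}$), equating coefficients of $u^{-2k}$ in the displayed identity gives $+[q_k]=q_k^\bot$ for each $0\leq k\leq M_i$ under $\jmath$, which is precisely the correspondence claimed.

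The only delicate point is the rescaling, which must be handled consistently so that the powers of $u$ on both sides of $\tilde C_i(u)=+[\tilde B_i(u^{-1})]$ are truly the same expansion variable; beyond that the proof is an immediate readoff from the two previously established expansions, so no substantive obstacle is expected.
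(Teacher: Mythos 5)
Your argument is correct and is exactly the route the paper intends: the corollary is stated without proof as an immediate consequence of Proposition \ref{prpbb}, Lemma \ref{ccc}, and the relation $C_i(u)=B_i^\dag(u^{-1})$, and your rescaling check $\tilde C_i(u)=u^{-M_i}\,u^{M_i}\,{+}\bigl[\tilde B_i(u^{-1})\bigr]={+}\bigl[\tilde B_i(u^{-1})\bigr]$ followed by comparing coefficients of $u^{-2k}$ is the natural way to make that implication explicit.
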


From proposition (\ref{prpbb}) and lemma (\ref{ccc}), we obtain the following proposition.
\begin{prp}
In the $M_1,M_2\rightarrow \infty$ limit, operators $\tilde{B}_i(u)$ and $\tilde{\mathcal C}_i(u)$ have the following vertex operator representations
\bea
\tilde{B}_1(u)&=&e^{\xi_t({\bf x}-\tilde{\partial}_{\bf y},u^2)}=\Gamma_1^-(u^2),\\
\tilde{\mathcal C}_1(u)&=&e^{\xi(\tilde{\partial}_{\bf x},u^{-2})}\ =\Gamma_1^+(u^2),\\
\tilde{B}_2(u)&=&e^{\xi_t({\bf y}-\tilde{\partial}_{\bf x},u^2)}=\Gamma_2^-(u^2),\\
\tilde{\mathcal C}_2(u)&=&e^{\xi(\tilde{\partial}_{\bf y},u^{-2})}\  =\Gamma_2^+(u^2),
\eea
where the vertex operators $\Gamma_i^\pm(t),i=1,2$ are defined in (\ref{g1}) and (\ref{g2}).
\end{prp}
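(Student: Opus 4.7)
My plan is to derive this proposition as a direct consequence of Proposition~\ref{prpbb} and Lemma~\ref{ccc}, by sending $M_1, M_2 \to \infty$ in their explicit finite-truncation formulas and identifying the resulting infinite sums with the generating functions of the vertex operators $\Gamma_i^\pm$.

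On the creation side, Proposition~\ref{prpbb} gives $\tilde{B}_i(u) = \sum_{k=0}^{M_i} u^{2k}\, q_k(\cdot)$ with the shifted arguments ${\bf x}-\tilde{\partial}_{\bf y}$ (for $i=1$) and ${\bf y}-\tilde{\partial}_{\bf x}$ (for $i=2$). Dropping the truncation as $M_i\to\infty$ and invoking the defining generating function (\ref{qr}) with these shifted arguments substituted for ${\bf x}$ immediately yields $e^{\xi_t({\bf x}-\tilde{\partial}_{\bf y},\, u^2)} = \Gamma_1^-(u^2)$ and $e^{\xi_t({\bf y}-\tilde{\partial}_{\bf x},\, u^2)} = \Gamma_2^-(u^2)$. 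This step is an immediate definition-chase.

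On the annihilation side, Lemma~\ref{ccc} gives the analogous truncated sums with $q_k^\bot$ in place of $q_k$. Passing to the limit reduces the task to the operator identity $\sum_{k\geq 0} z^k q_k^\bot({\bf x}) = e^{\xi(\tilde{\partial}_{\bf x},\, z)}$. I would derive this from the definition of $q_k^\bot$ as the adjoint of multiplication by $q_k({\bf x})$ under the Hall--Littlewood scalar product $(P_\lambda, Q_\mu)=\delta_{\lambda\mu}$, for which $(x_n)^\bot = \tfrac{1}{n(1-t^n)}\partial_{x_n}$. Substituting this into the exponent $\xi_t({\bf x},z)=\sum_n (1-t^n)x_n z^n$ kills the factor $(1-t^n)$ and leaves exactly $\xi(\tilde{\partial}_{\bf x}, z) = \sum_n \tfrac{1}{n}\partial_{x_n}z^n$. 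Since $q_k^\bot$ is then a purely differential operator in the $\partial_{x_n}$'s, the argument ${\bf x}-\tilde{\partial}_{\bf y}$ has no effect on the substitution; hence $\tilde{C}_1(u) \to \Gamma_1^+(u^2)$ in the limit, and an identical argument after interchanging ${\bf x}$ and ${\bf y}$ handles $\tilde{C}_2(u)$.

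The main technical obstacle is pinning down the scalar-product conventions needed to verify the identity $\sum_k z^k q_k^\bot({\bf x})=e^{\xi(\tilde{\partial}_{\bf x}, z)}$ and to interpret the notation $q_k^\bot({\bf x}-\tilde{\partial}_{\bf y})$ consistently; once these points are clarified the proposition follows mechanically. Convergence of the infinite sums is not an issue, since on any fixed $Q_{[\lambda,\mu]}({\bf x},{\bf y})$ only finitely many terms of each generating series act nontrivially: $q_k$ raises and $q_k^\bot$ lowers the overall degree by $k$.
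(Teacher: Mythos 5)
Your proposal follows essentially the same route as the paper, which derives this proposition directly from Proposition~\ref{prpbb} and Lemma~\ref{ccc} by dropping the truncation at $M_i$ and matching the resulting series against the generating function (\ref{qr}) and its adjoint. Your additional verification that $(x_n)^\bot=\tfrac{1}{n(1-t^n)}\partial_{x_n}$ cancels the $(1-t^n)$ in $\xi_t$ to produce $\xi(\tilde{\partial}_{\bf x},z)$ correctly supplies the detail the paper leaves implicit, so the argument is sound.
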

Define
\be\label{tildepsi}
|\tilde\Psi_N(u_1,\cdots,u_N)\rangle:=\prod_{j=1}^NB_2(u_j)B_1(u_j)|0\rangle
\ee
then we obtain the following proposition which gives the expansion of the $(N,N)$-particle vector (\ref{tildepsi}).
\begin{prp}
The expansion of the $(N,N)$-particle vector (\ref{tildepsi}) in terms of basis vector (\ref{nmbasis}) is given by the formula
\bea
&&|\tilde\Psi_N(u_1,\cdots,u_N)\rangle\nonumber\\
&=&(u_1\cdots u_N)^{-M_1-M_2}\sum_{\lambda,\mu}P_\lambda(u_1^2,\cdots,u_N^2)P_\mu(u_1^2,\cdots,u_N^2)\bigotimes_{i=0}^{M_1} |n_i\rangle_i^{(1)}\bigotimes\bigotimes_{i=0}^{M_2} |m_i\rangle_i^{(2)}\nonumber\\
&=&(u_1\cdots u_N)^{-M_1-M_2}\sum_{\lambda,\mu}P_\lambda(u_1^2,\cdots,u_N^2)P_\mu(u_1^2,\cdots,u_N^2)Q_{[\lambda,\mu]}({\bf x},{\bf y}),\nonumber
\eea
where the sum is over Young diagrams $\lambda$ with at most $N$ rows and at most $M_1$ columns, Young diagrams $\mu$ with at most $N$ rows and at most $M_2$ columns.
\end{prp}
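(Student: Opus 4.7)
The plan is to exploit the tensor product structure $\mathcal{F} = \mathcal{F}^{(1)} \otimes \mathcal{F}^{(2)}$ and reduce everything to the single-site result recalled in Section~\ref{sect3}. The starting observation is that since $B_1(u)$ is built out of the operators $B_i^{(1)}, B_i^{(1)\dagger}$ acting only on $\mathcal{F}^{(1)}$ while $B_2(u)$ is built from operators acting only on $\mathcal{F}^{(2)}$, the operators $B_1(u_i)$ and $B_2(u_j)$ commute for all $i,j$. (In the algebra of coupled Hall--Littlewood functions this commutation is exactly the content of Corollary~\ref{qQQq} extended from a single $q_k$ to the product $H_{M_1}({\bf x}-\tilde{\partial}_{\bf y}, u^2)$ and $H_{M_2}({\bf y}-\tilde{\partial}_{\bf x}, v^2)$, since each $H_{M_i}$ is a polynomial in the $q_k$'s.) Consequently
\[
\prod_{j=1}^N B_2(u_j)B_1(u_j) \;=\; \Bigl(\prod_{j=1}^N B_1(u_j)\Bigr)\Bigl(\prod_{j=1}^N B_2(u_j)\Bigr).
\]

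Next I would apply this factorization to the vacuum $|0\rangle = |0\rangle^{(1)} \otimes |0\rangle^{(2)}$. Because the two products act on independent tensor factors, the state splits as a tensor product of a vector in $\mathcal{F}^{(1)}$ and one in $\mathcal{F}^{(2)}$, and on each factor I would invoke the single $q$-boson expansion recorded in Section~\ref{sect3}: $\prod_{j=1}^N B_1(u_j)|0\rangle^{(1)} = (u_1\cdots u_N)^{-M_1}\sum_\lambda P_\lambda(u_1^2,\ldots,u_N^2)\,|\{n_i\}\rangle^{(1)}$ with $\lambda = 1^{n_1}2^{n_2}\cdots$, and symmetrically for $B_2$ on $\mathcal{F}^{(2)}$. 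Multiplying these, the two scalar prefactors combine into $(u_1\cdots u_N)^{-M_1-M_2}$ and the two Hall--Littlewood polynomials $P_\lambda$ and $P_\mu$ appear as independent factors in the double sum.

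Finally, I would pass through the isometry $\jmath$ of (\ref{jmath}) to translate $|\{n_i\}\rangle^{(1)} \otimes |\{m_i\}\rangle^{(2)}$ into the coupled Hall--Littlewood function $Q_{[\lambda,\mu]}({\bf x},{\bf y})$, using Corollary~\ref{QQQ} to rewrite this as $Q_\lambda({\bf x}-\tilde{\partial}_{\bf y}) Q_\mu({\bf y}-\tilde{\partial}_{\bf x})\cdot 1$ if a direct check that $\jmath$ intertwines the factored action is required. The constraints on the summation indices then follow from bookkeeping: $N$ applications of each creation operator produce vectors in $\mathcal{F}^{N,N}$, so $\lambda$ has at most $N$ rows, and since only the sites $i=0,\dots,M_i$ are populated, $\lambda$ has parts (hence columns) at most $M_1$ and $\mu$ at most $M_2$.

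The only genuine obstacle is the commutation claim in the first step, which is trivial on the $q$-boson side but less transparent on the vertex-operator side where $B_1$ involves $\tilde{\partial}_{\bf y}$ and $B_2$ involves $\tilde{\partial}_{\bf x}$; everything afterwards is a transcription of the known one-site identity. I would therefore present the commutation step either by direct appeal to the tensor decomposition of $\mathcal{F}$, or, working purely inside $\C(t)[{\bf x},{\bf y}]$, by invoking Corollary~\ref{qQQq} iteratively to move each $q_k({\bf y}-\tilde{\partial}_{\bf x})$ past each $q_\ell({\bf x}-\tilde{\partial}_{\bf y})$ inside the expansion of the truncated generating functions $H_{M_1}$ and $H_{M_2}$.
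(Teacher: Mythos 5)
Your proposal follows essentially the same route as the paper's proof: commute the $B_1(u_j)$ and $B_2(u_j)$ factors, apply the known one-site expansion $\prod_j B_i(u_j) = (u_1\cdots u_N)^{-M_i}\sum P_\lambda(u_1^2,\ldots,u_N^2)\,Q_\lambda(\cdot)$ to each factor, and identify the product $Q_\lambda({\bf x}-\tilde{\partial}_{\bf y})Q_\mu({\bf y}-\tilde{\partial}_{\bf x})\cdot 1$ with $Q_{[\lambda,\mu]}$ via (\ref{QQQ}). The extra care you take with the commutation step (via the tensor decomposition or Corollary~\ref{qQQq}) is a sound elaboration of what the paper states without comment.
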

\begin{proof}
The operators $B_1(u)$ and $B_2(u)$ are commutative and
\bea
\prod_{j=1}^NB_1(u_j)&=&(u_1\cdots u_N)^{-M_1}\sum_{\lambda}P_{\lambda}(u_1^2,\cdots,u_N^2)Q_\lambda({\bf x}-\tilde{ \partial}_{\bf y}),\\
\prod_{j=1}^NB_2(u_j)&=&(u_1\cdots u_N)^{-M_2}\sum_{\mu}P_{\mu}(u_1^2,\cdots,u_N^2)Q_\mu({\bf y}-\tilde{ \partial}_{\bf x}),
\eea
then we have
\bea
&&|\tilde\Psi_N(u_1,\cdots,u_N)\rangle\nonumber\\
&=&(u_1\cdots u_N)^{-M_1-M_2}\sum_{\lambda,\mu}P_\lambda(u_1^2,\cdots,u_N^2)P_\mu(u_1^2,\cdots,u_N^2)Q_\lambda({\bf x}-\tilde{ \partial}_{\bf y})Q_\mu({\bf y}-\tilde{ \partial}_{\bf x})\cdot1\nonumber\\
&=& (u_1\cdots u_N)^{-M_1-M_2}\sum_{\lambda,\mu}P_\lambda(u_1^2,\cdots,u_N^2)P_\mu(u_1^2,\cdots,u_N^2)Q_{[\lambda,\mu]}({\bf x},{\bf y}).
\eea
By the restrictions on $\lambda$ and $\mu$, we obtain the conclusion.
\end{proof}

Recall that
\bea
&&T(u)=\left( \begin{array}{cc}
A(u) & B(u) \\
C(u) & D(u)
\end{array} \right)\nonumber\\
&=&T_2(u)T_1(u)=\left( \begin{array}{cc}
A_2(u) & B_2(u) \\
C_2(u) & D_2(u)
\end{array} \right)\left( \begin{array}{cc}
A_1(u) & B_1(u) \\
C_1(u) & D_1(u)
\end{array} \right),\nonumber
\eea
then
\bea
B(u)&=&A_2(u)B_1(u)+ B_2(u) D_1(u)\\
&=&B_2(u)(u^{-1}B_1(u)+uB_1^\dag(u^{-1})).\nonumber
\eea
Then we need the following formulas:
\begin{prp}
From the bilinear equation (\ref{rtt}), we have
\be
B_i^\dag(u^{-1})B_i(v)=\frac{u^2-tv^2}{u^2-v^2}B_i(v)B_i^\dag(u^{-1})-\frac{v^2(1-t)}{u^2-v^2}B_i(u)B_i^\dag(v^{-1}).
\ee
for $i=1,2$.
\end{prp}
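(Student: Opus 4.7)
The identity to prove is an intra-chain commutation between $B_i^\dag(u^{-1})$ and $B_i(v)$, so the plan is to extract the corresponding ABCD commutation from the $T_i$-level RTT equation and then rewrite $A_i, C_i, D_i$ in the $B_i, B_i^\dag$ language. Since each factor $L^{(i)}_j(u)$ obeys RLL with the same $R$-matrix as for the full monodromy, the product $T_i(u)$ itself satisfies the bilinear equation (\ref{intertwine}). Consequently the commutation relation (\ref{ytt3}) specializes verbatim to the $i$-th chain, giving
\[
C_i(u)B_i(v) - q\,B_i(v)C_i(u) \;=\; q^{1/2} g(u,v)\bigl(A_i(u)D_i(v) - A_i(v)D_i(u)\bigr).
\]

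Next I would substitute the identifications $C_i(u) = B_i^\dag(u^{-1})$, $A_i(u) = u^{-1}B_i(u)$, $D_i(u) = uB_i^\dag(u^{-1})$ supplied by the preceding Lemma. The left-hand side becomes $B_i^\dag(u^{-1})B_i(v) - qB_i(v)B_i^\dag(u^{-1})$, while $A_i(u)D_i(v) = (v/u)B_i(u)B_i^\dag(v^{-1})$ and $A_i(v)D_i(u) = (u/v)B_i(v)B_i^\dag(u^{-1})$. Solving algebraically for $B_i^\dag(u^{-1})B_i(v)$ and reading off $g(u,v) = \frac{uv(q^{-1/2}-q^{1/2})}{v^2-u^2}$ by swapping $u \leftrightarrow v$ in the definition (\ref{Rmatrix}), the coefficient of $B_i(v)B_i^\dag(u^{-1})$ simplifies to $q + q^{1/2}g(u,v)\cdot(u/v) = \frac{u^2 - qv^2}{u^2 - v^2}$, and the coefficient of $B_i(u)B_i^\dag(v^{-1})$ simplifies to $q^{1/2}g(u,v)\cdot(v/u) = -\frac{v^2(1-q)}{u^2 - v^2}$. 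Identifying the Hall-Littlewood parameter $t$ with the $q$-boson parameter $q$, as built into the realizations of Proposition \ref{prpbb} and Lemma \ref{ccc}, reproduces exactly the stated formula.

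The main obstacle is purely bookkeeping: correctly interpreting $g(u,v)$ versus $g(v,u)$ (only the latter is spelled out in (\ref{Rmatrix})), and tracking signs while solving for $B_i^\dag(u^{-1})B_i(v)$. No new RTT manipulation is required beyond (\ref{ytt3}) itself, since that relation holds for every monodromy built from $L$-matrices intertwined by the same $R$-matrix, and in particular for each $T_i(u)$ separately; once the $A_i, C_i, D_i$ are eliminated via the Lemma, the identity collapses to a one-line rational simplification in $u, v, q$.
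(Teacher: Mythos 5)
Your proposal is correct and follows exactly the route the paper intends: specialize the RTT component relation (\ref{ytt3}) to the single chain $T_i(u)$ (valid since each $T_i$ is built from $L$-matrices intertwined by the same $R$-matrix), substitute $A_i(u)=u^{-1}B_i(u)$, $C_i(u)=B_i^\dag(u^{-1})$, $D_i(u)=uB_i^\dag(u^{-1})$ from the preceding Lemma, and identify $t$ with $q$. One bookkeeping slip: the coefficient of $B_i(v)B_i^\dag(u^{-1})$ should be $q-q^{1/2}g(u,v)\,(u/v)$ (the term $A_i(v)D_i(u)$ enters with a minus sign), not $q+q^{1/2}g(u,v)\,(u/v)$; with that sign it does simplify to $\frac{u^2-qv^2}{u^2-v^2}$ as you state, so your final coefficients are the right ones.
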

Further we can derive the following corollary.
\begin{cor}
The following equation holds
\be
H_M^\bot(z)H_M(w)=\frac{1-tzw}{1-zw}H_M(w)H_M^\bot(z)-\frac{1-t}{1-zw}(zw)^{M+1}H_M(z^{-1})H_M^\bot(w^{-1}),
\ee
where $H_M(z)$ denotes $H_M({\bf x}, z)$.
\end{cor}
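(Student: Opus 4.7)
The plan is to substitute the explicit expressions for $B_i(u)$ and $B_i^\dag(u^{-1})$ as operators on the space $\C(t)_{M_1,M_2}[{\bf x},{\bf y}]$ into the commutation relation of the preceding proposition and then to change variables so that everything is expressed in $(z,w)$. I would fix one index $i$ and simply write $M=M_i$.

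First I would record the following dictionary, obtained from Proposition~\ref{prpbb} and Lemma~\ref{ccc}: one has $B_i(u) = u^{-M} H_M(u^2)$ and $B_i^\dag(u^{-1}) = C_i(u) = u^{M} H_M^\bot(u^{-2})$, the latter because $\tilde C_i(u) = u^{-M}C_i(u) = H_M^\bot(u^{-2})$. Note the sign flip in the $u$-power: $B_i$ carries $u^{-M}$ while $B_i^\dag(u^{-1})$ carries $u^{+M}$. Applied with the argument $v^{-1}$, one analogously has $B_i^\dag(v^{-1}) = v^M H_M^\bot(v^{-2})$.

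Next I would substitute these three expressions into the identity stated in the preceding proposition. Collecting the scalar prefactors: the left-hand side acquires $u^M v^{-M}$; the first term on the right-hand side also acquires $u^M v^{-M}$ (since $B_i(v)B_i^\dag(u^{-1})$ carries $v^{-M}u^M$); the second term on the right-hand side however acquires $u^{-M} v^{M}$, differing from the left-hand side by the factor $u^{-2M}v^{2M} = (v^2/u^2)^M$. Dividing the entire identity by $u^Mv^{-M}$ leaves the $H_M$ and $H_M^\bot$ operators with an extra multiplicative $(v^2/u^2)^M$ attached only to the second right-hand term.

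The final step is the substitution $z=u^{-2}$, $w=v^2$, so that $u^2 = z^{-1}$, $v^{-2}=w^{-1}$ and $v^2/u^2 = zw$. Simplifying the rational prefactors gives
\[
\frac{u^2-tv^2}{u^2-v^2} = \frac{1-tzw}{1-zw}, \qquad \frac{v^2}{u^2-v^2} = \frac{zw}{1-zw},
\]
and the leftover $(v^2/u^2)^M = (zw)^M$ combines with the $zw$ already present in the second prefactor to produce exactly $(zw)^{M+1}$. The operator arguments become $H_M(u^2) = H_M(z^{-1})$ and $H_M^\bot(v^{-2}) = H_M^\bot(w^{-1})$, yielding precisely the stated identity.

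The work is essentially bookkeeping; the only non-routine point is recognising that the asymmetry between the $u^{-M}$ normalisation of $B_i(u)$ and the $u^{+M}$ normalisation of $B_i^\dag(u^{-1})$ is exactly what produces the truncation correction $(zw)^{M+1}$, which is absent in the naive infinite-variable commutation $H^\bot(z)H(w) = \tfrac{1-tzw}{1-zw}H(w)H^\bot(z)$. Keeping track of this asymmetry correctly is the main (and essentially only) obstacle.
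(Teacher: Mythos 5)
Your proposal is correct, and it follows exactly the route the paper intends: the corollary is stated as an immediate consequence of the preceding proposition, obtained by substituting $B_i(u)=u^{-M}H_M(u^2)$ and $B_i^\dag(u^{-1})=C_i(u)=u^{M}H_M^\bot(u^{-2})$ and changing variables to $z=u^{-2}$, $w=v^2$. Your bookkeeping of the asymmetric $u^{\pm M}$ normalisations, which produces the $(zw)^{M+1}$ truncation term, is accurate.
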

Recall that \[
|\Psi_N(u_1,\cdots,u_N)\rangle=\prod_{j=1}^N B(u_j)|0\rangle.
\]
 Then by a direct calculation, we get the following proposition.
\begin{prp}
The operators $u_1^{-1}B_1(u_1)+u_1B_1^\dag(u_1^{-1})$ and $u_2^{-1}B_1(u_2)+u_2B_1^\dag(u_2^{-1})$ are commutative, which tells us that the coefficients, in the expansion $|\Psi_N(u_1,\cdots,u_N)\rangle$ in terms of basis vector (\ref{nmbasis}), are symmetric functions of variables $u_1^2,\cdots,u_N^2$.
\end{prp}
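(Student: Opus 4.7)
The proposition has two parts: (i) the operators $X(u_i):=u_i^{-1}B_1(u_i)+u_i B_1^{\dag}(u_i^{-1})$ commute for different values of the spectral parameter, and (ii) this commutativity, combined with the factorization $B(u)=B_2(u)X(u)$, implies the symmetry of the expansion coefficients of $|\Psi_N(u_1,\dots,u_N)\rangle$. I would treat (i) by a direct manipulation using the exchange relation stated just above, and (ii) by extracting commutativity of $B(u)$ from the ``two-sector'' factorization.

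First I would expand $X(u_1)X(u_2)-X(u_2)X(u_1)$ into its four kinds of bilinear terms: $B_1 B_1$, $B_1^{\dag} B_1^{\dag}$, and the two ``cross'' orderings $B_1 B_1^{\dag}$ and $B_1^{\dag} B_1$. The RTT relation \eqref{intertwine} applied to $T_1(u)$ yields $B_1(u_1)B_1(u_2)=B_1(u_2)B_1(u_1)$, whence by adjunction $B_1^{\dag}(u_1^{-1})B_1^{\dag}(u_2^{-1})=B_1^{\dag}(u_2^{-1})B_1^{\dag}(u_1^{-1})$. Since in the two pure terms the prefactors $u_1^{-1}u_2^{-1}$ and $u_1 u_2$ are already symmetric in $(u_1,u_2)$, these contributions drop out of the commutator.

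The heart of the argument lies in the cross terms. Using the exchange relation stated in the preceding proposition, I would rewrite both $B_1^{\dag}(u_1^{-1})B_1(u_2)$ and $B_1^{\dag}(u_2^{-1})B_1(u_1)$ as linear combinations of $B_1(u_2)B_1^{\dag}(u_1^{-1})$ and $B_1(u_1)B_1^{\dag}(u_2^{-1})$. Substituting these into $X(u_1)X(u_2)-X(u_2)X(u_1)$ and collecting the coefficient of, say, $B_1(u_1)B_1^{\dag}(u_2^{-1})$ gives
\[
u_1^{-1}u_2-\frac{u_1 u_2(1-t)}{u_1^2-u_2^2}-\frac{u_2(u_2^2-tu_1^2)}{u_1(u_2^2-u_1^2)}
\;=\;\frac{u_2(u_1^2-u_2^2)-u_1^2u_2(1-t)+u_2(u_2^2-tu_1^2)}{u_1(u_1^2-u_2^2)}\;=\;0,
\]
and by the $u_1\leftrightarrow u_2$ symmetry the coefficient of $B_1(u_2)B_1^{\dag}(u_1^{-1})$ also vanishes. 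This establishes $X(u_1)X(u_2)=X(u_2)X(u_1)$. The only real obstacle is bookkeeping: writing the two sides in a single common basis of ordered products before collecting coefficients; there is no conceptual subtlety.

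Finally, since $B_1$-operators act only in $\mathcal F^{(1)}$ and $B_2$-operators only in $\mathcal F^{(2)}$, every $X(u)$ commutes with every $B_2(v)$, and the $B_2(v_i)$ commute among themselves by RTT applied to $T_2$. Therefore
\[
B(u_1)B(u_2)=B_2(u_1)X(u_1)B_2(u_2)X(u_2)=B_2(u_1)B_2(u_2)X(u_1)X(u_2),
\]
which is symmetric in $(u_1,u_2)$ by the commutativity just proved. Hence $B(u_i)$ commute pairwise, so $\prod_j B(u_j)|0\rangle$ is symmetric in $u_1^2,\dots,u_N^2$ (the dependence of each $B(u_j)$ on $u_j$ enters through $u_j^{\pm 1}$ times even powers, as seen from the form of $X(u)$ and Proposition \ref{prpbb}); consequently the coefficients of its expansion in the basis \eqref{nmbasis} are symmetric functions of $u_1^2,\dots,u_N^2$, as claimed.
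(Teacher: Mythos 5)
Your proof is correct and follows essentially the same route the paper intends: the paper gives no written proof beyond the phrase ``by a direct calculation,'' and your computation---reducing the commutator to the two cross terms, normal-ordering them via the exchange relation $B_i^\dag(u^{-1})B_i(v)=\tfrac{u^2-tv^2}{u^2-v^2}B_i(v)B_i^\dag(u^{-1})-\tfrac{v^2(1-t)}{u^2-v^2}B_i(u)B_i^\dag(v^{-1})$, and checking that the collected coefficients vanish---is exactly that calculation. Your derivation of the symmetry of the expansion coefficients from the factorization $B(u)=B_2(u)\bigl(u^{-1}B_1(u)+uB_1^\dag(u^{-1})\bigr)$ likewise matches how the paper uses this proposition immediately afterwards.
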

Since \bea
\prod_{j=1}^N B(u_j)|0\rangle&=&\prod_{j=1}^N B_2(u_j)(u_j^{-1}B_1(u_j)+u_jB_1^\dag(u_j^{-1}))|0\rangle\nonumber\\
&=&\prod_{j=1}^N B_2(u_j)\prod_{j=1}^N(u_j^{-1}B_1(u_j)+u_jB_1^\dag(u_j^{-1}))|0\rangle,\nonumber
\eea
and
\[
\prod_{j=1}^N B_2(u_j)=(u_1\cdots u_N)^{-M_2}\sum_{\mu}P_\mu(u_1^2,\cdots,u_N^2)Q_\mu({\bf y}-\tilde{ \partial}_{\bf x}),
\]
where $\mu$ is a Young diagram with at most $N$ rows and at most $M_2$ columns.
Hence, in the following, we consider the expansion of $\prod_{j=1}^N(u_j^{-1}B_1(u_j)+u_jB_1^\dag(u_j^{-1}))|0\rangle$ by by inductions in the following proposition.

\begin{prp}
Let $k_1,k_2,\cdots,k_i$ be in the set $\{1,2,\cdots, N\}$ and satisfy $k_1<k_2<\cdots<k_i$. We denote $u_{k_1}u_{k_2}\cdots u_{k_i}$ by $u_{\{k\}}$ for short. Then we have
\bea
&&\prod_{j=1}^N(u_j^{-1}B_1(u_j)+u_jB_1^\dag(u_j^{-1}))|0\rangle\nonumber\\
&=&(u_1\cdots u_N)^{M_1+1}\sum_{i=0}^N\sum_{k_1,\cdots,k_i}(u_{\{k\}})^{-2M_1-2}\prod_{j\neq k_i}\frac{u_j^2-tu_{k_i}^2}{u_j^2-u_{k_i}^2}H_1(u_{k_1}^2)\cdots H_1(u_{k_i}^2)\cdot 1,\nonumber
\eea
where $H_1(u^2)=H_{M_1}({\bf x}-\tilde{\partial}_{\bf y},u^2)$ and
\[
H_1(u_{k_1}^2)\cdots H_1(u_{k_i}^2)\cdot 1=\sum_{\lambda}P_{\lambda}(u_{k_1}^2,\cdots,u_{k_i}^2)Q_{\lambda}({\bf x}-\tilde{\partial}_{\bf y})\cdot 1.
\]
\end{prp}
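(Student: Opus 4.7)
The plan is to argue by induction on $N$. The base case $N=0$ is immediate since both sides reduce to $|0\rangle$ (identified with $1\in\C(t)[{\bf x},{\bf y}]$) and the right-hand side collapses to its empty-subset term. For the inductive step, I exploit the commutativity of the operators $u_j^{-1}B_1(u_j)+u_jB_1^\dag(u_j^{-1})$ established in the preceding proposition to isolate the $N$-th factor, writing
$$\prod_{j=1}^N\bigl(u_j^{-1}B_1(u_j)+u_jB_1^\dag(u_j^{-1})\bigr)|0\rangle=\bigl(u_N^{-1}B_1(u_N)+u_NB_1^\dag(u_N^{-1})\bigr)\,\Omega_{N-1},$$
where $\Omega_{N-1}$ denotes the right-hand side of the claimed identity at level $N-1$, supplied by the inductive hypothesis.

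Substituting $B_1(u)=u^{-M_1}H_1(u^2)$ from Proposition~\ref{prpbb} and $B_1^\dag(u^{-1})=u^{-M_1}H_{M_1}^\bot({\bf x}-\tilde{\partial}_{\bf y},u^{-2})$ from Lemma~\ref{ccc} (combined with $C_1(u)=B_1^\dag(u^{-1})$), the $N$-th factor splits the computation in two. The piece $u_N^{-1}B_1(u_N)=u_N^{-M_1-1}H_1(u_N^2)$ simply prepends an $H_1(u_N^2)$ in front of every summand of $\Omega_{N-1}$ and produces exactly the contribution to the $N$-variable sum indexed by subsets that contain $N$. The piece $u_NB_1^\dag(u_N^{-1})=u_N^{-M_1+1}H_{M_1}^\bot(\,\cdot\,,u_N^{-2})$ has to be commuted through each of the $i$ factors $H_1(u_{k_l}^2)$ occurring in $\Omega_{N-1}$ by iterated use of the corollary
$$H_M^\bot(z)H_M(w)=\frac{1-tzw}{1-zw}H_M(w)H_M^\bot(z)-\frac{1-t}{1-zw}(zw)^{M+1}H_M(z^{-1})H_M^\bot(w^{-1}),$$
after which $H_{M_1}^\bot(\,\cdot\,,u_N^{-2})$ reaches the rightmost slot and acts on $1$ as the identity, since $q_0^\bot=1$ and $q_k^\bot\cdot 1=0$ for $k\geq 1$; this yields the contribution of the subsets not containing $N$.

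The main technical obstacle is in controlling this second half. Each elementary commutation above produces both a \emph{diagonal} term with prefactor $(1-tzw)/(1-zw)$ and a \emph{swap} term that exchanges the spectral parameters and is weighted by $-(1-t)(zw)^{M_1+1}/(1-zw)$, so sliding $H_{M_1}^\bot$ past $i$ instances of $H_1$ a priori generates $2^i$ monomials. I expect that most of the swap contributions telescope, leaving only the symmetric rational prefactor $\prod(u_j^2-tu_{k_l}^2)/(u_j^2-u_{k_l}^2)$ stated in the proposition: this will follow from a partial-fraction identity in the squared spectral parameters $u_{k_l}^2$, in the same vein as the standard cancellations that underlie the Bethe-ansatz combinatorics of the $q$-boson model. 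The $(zw)^{M_1+1}$ powers carried by the surviving swap terms are precisely what promotes the naive factor $(u_1\cdots u_N)^{-M_1+1}$ obtained from the plain annihilator action on the vacuum up to the stated prefactor $(u_1\cdots u_N)^{M_1+1}(u_{\{k\}})^{-2M_1-2}$. Once this bookkeeping is carried out, reassembling the two contributions and re-indexing the sum over subsets of $\{1,\ldots,N\}$ closes the induction.
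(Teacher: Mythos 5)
The paper itself supplies no argument for this proposition (it only announces that the expansion is obtained ``by inductions''), so your attempt has to stand on its own, and as written it does not: the entire content of the identity is concentrated in the step you defer. You say that after sliding $H_{M_1}^\bot(\cdot,u_N^{-2})$ past the $i$ factors $H_1(u_{k_l}^2)$ you ``expect'' the $2^i$ resulting monomials to telescope, via an unspecified partial-fraction identity, into the prefactor $\prod\frac{u_j^2-tu_{k_l}^2}{u_j^2-u_{k_l}^2}$. That cancellation \emph{is} the proposition; asserting it as an expectation leaves the proof empty at its core. A workable route is the standard algebraic-Bethe-ansatz device: compute only the ``no-swap'' (diagonal) contribution explicitly, observe that the operators $u_j^{-1}B_1(u_j)+u_jB_1^\dag(u_j^{-1})$ commute (the preceding proposition), hence the coefficient of each $H_1(u_{k_1}^2)\cdots H_1(u_{k_i}^2)\cdot 1$ is symmetric under permutations of $u_1,\dots,u_N$ preserving the subset, and recover the full coefficient from the diagonal one by symmetrization. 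None of this is in your write-up.

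There is also a concrete error in the part you do claim to control. For a subset $T=S\cup\{N\}$ the target coefficient is
\begin{equation*}
(u_1\cdots u_N)^{M_1+1}(u_T)^{-2M_1-2}\prod_{l\in T}\ \prod_{j\notin T}\frac{u_j^2-tu_l^2}{u_j^2-u_l^2},
\end{equation*}
which contains the factor $\prod_{j\notin T}\frac{u_j^2-tu_N^2}{u_j^2-u_N^2}$ attached to the new index $l=N$. Prepending $u_N^{-M_1-1}H_1(u_N^2)$ to the level-$(N-1)$ term for $S$ reproduces every factor of this coefficient \emph{except} that one. So the piece $u_N^{-1}B_1(u_N)$ does not ``produce exactly the contribution indexed by subsets that contain $N$'': the missing factor must be supplied by swap terms coming from the $u_NB_1^\dag(u_N^{-1})$ piece, which therefore feeds into \emph{both} classes of subsets. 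Your clean dichotomy (first summand $\to$ subsets with $N$, second summand $\to$ subsets without $N$) breaks down, and with it the organization of the induction step. Until the mixing of swap terms across the two classes is tracked and the resulting rational identity is actually proved, the argument is not a proof.
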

From the discussion above, we get the expansion of $|\Psi_N(u_1,\cdots,u_N)\rangle$ as following.
\begin{prp}The $N$-particle vector $|\Psi_N(u_1,\cdots,u_N)\rangle$ can be written as
\bea
|\Psi_N(u_1,\cdots,u_N)\rangle&=&(u_1\cdots u_N)^{-M_2+M_1+1}\sum_{i=0}^N\sum_{k_1,\cdots,k_i}(u_{\{k\}})^{-2M_1-2}\prod_{j\neq k_i}\frac{u_j^2-tu_{k_i}^2}{u_j^2-u_{k_i}^2}\nonumber\\
&&\sum_{\lambda,\mu}P_{\lambda}(u_{k_1}^2,\cdots,u_{k_i}^2)P_\mu(u_1^2,\cdots,u_N^2)Q_{[\lambda,\mu]}({\bf x},{\bf y}),\nonumber
\eea
where $\lambda$ is a Young diagram with at most $i$ rows and $M_1$ columns, and $\mu$ a Young diagram with at most $N$ rows and $M_2$ columns.
\end{prp}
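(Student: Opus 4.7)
The plan is to assemble the formula by composing three earlier results: the factorization $B(u)=B_2(u)(u^{-1}B_1(u)+uB_1^\dag(u^{-1}))$, the expansion of $\prod_{j=1}^N B_2(u_j)$ as a generating series of Hall--Littlewood $Q_\mu({\bf y}-\tilde\partial_{\bf x})$'s, and the inductive expansion of $\prod_{j=1}^N(u_j^{-1}B_1(u_j)+u_j B_1^\dag(u_j^{-1}))|0\rangle$ obtained in the previous proposition. The commutativity of $B_1$ and $B_2$ (which was used in writing $\prod_j B(u_j)=\prod_j B_2(u_j)\prod_j(u_j^{-1}B_1(u_j)+u_jB_1^\dag(u_j^{-1}))$) allows the two factors to be processed independently.

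First, I would apply the previous proposition to rewrite
\[
\prod_{j=1}^N(u_j^{-1}B_1(u_j)+u_jB_1^\dag(u_j^{-1}))|0\rangle=(u_1\cdots u_N)^{M_1+1}\sum_{i=0}^N\sum_{k_1<\cdots<k_i}(u_{\{k\}})^{-2M_1-2}\prod_{j\neq k_i}\frac{u_j^2-tu_{k_i}^2}{u_j^2-u_{k_i}^2}H_1(u_{k_1}^2)\cdots H_1(u_{k_i}^2)\cdot 1,
\]
and then use the stated identity
\[
H_1(u_{k_1}^2)\cdots H_1(u_{k_i}^2)\cdot 1=\sum_{\lambda}P_{\lambda}(u_{k_1}^2,\cdots,u_{k_i}^2)\,Q_{\lambda}({\bf x}-\tilde{\partial}_{\bf y})\cdot 1,
\]
where the condition that $P_\lambda$ is a polynomial in exactly $i$ arguments forces $\ell(\lambda)\le i$, and membership in $\C(t)_{M_1,M_2}[{\bf x},{\bf y}]$ enforces at most $M_1$ columns.

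Next, I would apply $\prod_{j=1}^N B_2(u_j)=(u_1\cdots u_N)^{-M_2}\sum_{\mu}P_\mu(u_1^2,\cdots,u_N^2)Q_\mu({\bf y}-\tilde\partial_{\bf x})$ (where $\mu$ is constrained to at most $N$ rows and $M_2$ columns) to the left of the above. The prefactor combines as $(u_1\cdots u_N)^{-M_2}\cdot(u_1\cdots u_N)^{M_1+1}=(u_1\cdots u_N)^{-M_2+M_1+1}$, matching the claimed exponent. What remains inside is $Q_\mu({\bf y}-\tilde\partial_{\bf x})Q_\lambda({\bf x}-\tilde\partial_{\bf y})\cdot 1$; by Corollary \ref{qQQq} (the commutation of the two families of $q$-operators, extended multiplicatively to the corresponding $Q$'s) this equals $Q_\lambda({\bf x}-\tilde\partial_{\bf y})Q_\mu({\bf y}-\tilde\partial_{\bf x})\cdot 1$, which by Corollary \ref{QQQ} is precisely $Q_{[\lambda,\mu]}({\bf x},{\bf y})$.

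Collecting all the pieces yields the stated formula. There is no real obstacle here since every ingredient has been prepared; the only point requiring care is bookkeeping of the restrictions on $\lambda$ and $\mu$ (at most $i$ and $N$ rows respectively, and at most $M_1,M_2$ columns), which come from the arity of the $P$-polynomials and from the fact that the whole construction takes place in the subspace $\C(t)_{M_1,M_2}[{\bf x},{\bf y}]$. The remaining subtlety is justifying that one may freely move $Q_\mu({\bf y}-\tilde\partial_{\bf x})$ past $Q_\lambda({\bf x}-\tilde\partial_{\bf y})$ before acting on $1$; this reduces, via the monomial expansion of the $Q$'s in the $q_n$, to the commutation stated in Corollary \ref{qQQq} and is therefore routine.
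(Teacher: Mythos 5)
Your proposal is correct and follows essentially the same route the paper intends: the paper derives this proposition directly ``from the discussion above,'' i.e.\ by combining the factorization $B(u)=B_2(u)(u^{-1}B_1(u)+uB_1^\dag(u^{-1}))$, the expansion of $\prod_j B_2(u_j)$, and the preceding proposition's inductive expansion, exactly as you do. Your added care in invoking Corollary~\ref{qQQq} to commute $Q_\mu({\bf y}-\tilde\partial_{\bf x})$ past $Q_\lambda({\bf x}-\tilde\partial_{\bf y})$ before applying (\ref{QQQ}) is a detail the paper leaves implicit but is entirely consistent with its argument.
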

In special case $\mu=\emptyset$, we have $Q_{[\lambda,\emptyset]}({\bf x},{\bf y})=Q_{\lambda}({\bf x})$. Let $M_2=\emptyset$,
\bea
|\Psi_N(u_1,\cdots,u_N)\rangle&=&\prod_{j=1}^N B_1(u_j)|0\rangle=(u_1\cdots u_N)^{-M_1}\prod_{j=1}^N \tilde{B}_1(u_j)|0\rangle\nonumber\\
&=&(u_1\cdots u_N)^{-M_1}\sum_{\lambda}P_\lambda(u_1^2,\cdots,u_N^2)Q_\lambda({\bf x}),\nonumber
\eea
which is the same as in \cite{PS,NVT}.

In special case $t=0$, \bea
|\Psi_N(u_1,\cdots,u_N)\rangle&=&(u_1\cdots u_N)^{-M_2+M_1+1}\sum_{i=0}^N\sum_{k_1,\cdots,k_i}(u_{\{k\}})^{-2M_1-2}\prod_{j\neq k_i}\frac{u_j^2}{u_j^2-u_{k_i}^2}\nonumber\\
&&\sum_{\lambda,\mu}S_{\lambda}(u_{k_1}^2,\cdots,u_{k_i}^2)S_\mu(u_1^2,\cdots,u_N^2)S_{[\lambda,\mu]}({\bf x},{\bf y}),\nonumber
\eea
which is the same as in \cite{Wang1}.
\section{Vertex operators and topological strings on the conifold}\label{sect5}

The A-model topological string partition function on the conifold is given by
\be
Z_{conifold}^{top}(z,t)=\prod_{n=1}^\infty\frac{(1-tz^n)^n}{(1-z^n)^n},
\ee
which equals the generating function of weighted plane partitions\cite{MV} and turns into the MacMahon functions when $t=0$. It is known that $Z_{conifold}^{top}(z,t)$  can be written as a fermionic correlator involving the vertex operators
\[
\tilde{\Gamma}^-(z)=e^{\xi_t({\bf x},z)},\quad \tilde{\Gamma}^+(z)=e^{\xi(\tilde{\partial}_{\bf x},z^{-1})},
\]
 with a particular specialization of the values of $z=q^{\pm 1/2},q^{\pm 3/2},q^{\pm 5/2}\cdots$, i.e.,
 \be
Z_{conifold}^{top}(z,t)=\langle 0|\prod_{m=1}^\infty  \tilde{\Gamma}^+(z^{-m+1/2})\prod_{m=1}^\infty \tilde{\Gamma}^-(z^{m-1/2})|0\rangle.
\ee
In the following, we will give that $Z_{conifold}^{top}(z,t)$ can be obtained from the vertex operators $\Gamma^{\pm}_i(t)$, $i=1,2$.
\begin{lem}\label{lemmazc}
The following relation holds
\be
\langle 0|\prod_{m=1}^\infty \Gamma_2^-(z^{m-1/2})\Gamma_1^-(z^{m-1/2})|0\rangle=\prod_{n\geq 1}\frac{(1-z^n)^n}{(1-tz^n)^n}\frac{(1-t^2z^n)^n}{(1-tz^n)^n}.
\ee
\end{lem}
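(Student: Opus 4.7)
The plan is to reduce the correlator to a product of two independent bosonic vacuum expectations, compute a single Campbell--Baker--Hausdorff contraction, and collapse the result by a short combinatorial count. First, I split each vertex operator into a ``multiplication'' half and a ``derivative'' half: $\Gamma_1^-(z)=A_1(z)B_1(z)$ with $A_1(z)=e^{\xi_t({\bf x},z)}$, $B_1(z)=e^{-\xi_t(\tilde\partial_{\bf y},z)}$, and symmetrically $\Gamma_2^-(z)=A_2(z)B_2(z)$ with $A_2(z)=e^{\xi_t({\bf y},z)}$, $B_2(z)=e^{-\xi_t(\tilde\partial_{\bf x},z)}$. Since $A_1,B_2$ act only on the ${\bf x}$-variables and $A_2,B_1$ only on the ${\bf y}$-variables, the ${\bf x}$- and ${\bf y}$-factors commute, so the full VEV factors as $\langle 0|\prod_m B_2(z_m)A_1(z_m)|0\rangle_{\bf x}\cdot\langle 0|\prod_m A_2(z_m)B_1(z_m)|0\rangle_{\bf y}$, where $z_m=z^{m-1/2}$.

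The only nontrivial contraction follows from $e^Xe^Y=e^{[X,Y]}e^Ye^X$ together with the series identity
\[
\sum_{n\geq 1}\frac{(1-t^n)^2}{n}u^n=\log\frac{(1-tu)^2}{(1-u)(1-t^2u)},
\]
which yields
\[
B_2(z)A_1(w)=\frac{(1-zw)(1-t^2zw)}{(1-tzw)^2}\,A_1(w)B_2(z),
\]
and by the symmetry ${\bf x}\leftrightarrow{\bf y}$, $1\leftrightarrow 2$ the same identity for $B_1(z)A_2(w)$. Sweeping all $B$'s past all $A$'s reduces each VEV to $1$ times a c-number. Within each $\Gamma_2^-(z_m)\Gamma_1^-(z_m)$ the ${\bf x}$-halves $B_2(z_m)A_1(z_m)$ are mis-ordered while the ${\bf y}$-halves $A_2(z_m)B_1(z_m)$ are already normal-ordered, so the ${\bf x}$-sweep produces contractions for all $m\leq k$ whereas the ${\bf y}$-sweep only for $m<k$. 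The total c-number is therefore
\[
\prod_{m\leq k}\frac{(1-z_mz_k)(1-t^2z_mz_k)}{(1-tz_mz_k)^2}\cdot\prod_{m<k}\frac{(1-z_mz_k)(1-t^2z_mz_k)}{(1-tz_mz_k)^2}.
\]

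Finally, the specialization $z_m=z^{m-1/2}$ gives $z_mz_k=z^{m+k-1}$, and collecting by $N=m+k-1$ the diagonal $m=k$ contributes one copy of $\frac{(1-z^N)(1-t^2z^N)}{(1-tz^N)^2}$ for each odd $N$, while each pair $m<k$ with $m+k-1=N$ contributes it twice (once from each product); a short count shows the overall exponent is exactly $N$ for every $N\geq 1$, producing the claimed infinite product. The expected main obstacle is not the CBH computation, which is essentially forced by the single contraction of $\alpha_n^{(x)}$ against $\alpha_{-n}^{(x)}$, but rather the asymmetric bookkeeping $m\leq k$ versus $m<k$ between the two sectors and the verification that the multiplicity collapses to $N$ regardless of the parity of $N$; both are routine but easy to miscount.
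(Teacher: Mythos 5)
Your proof is correct and follows essentially the same route as the paper: the same splitting of each $\Gamma_i^-$ into a multiplication half and a derivative half, and the same single contraction $e^{-\xi_t(\tilde\partial_{\bf x},z)}e^{\xi_t({\bf x},w)}=\frac{(1-zw)(1-t^2zw)}{(1-tzw)^2}e^{\xi_t({\bf x},w)}e^{-\xi_t(\tilde\partial_{\bf x},z)}$, which the paper packages as $\Gamma_2^-(z)\Gamma_1^-(w)=\frac{(1-zw)(1-t^2zw)}{(1-tzw)^2}:\Gamma_2^-(z)\Gamma_1^-(w):$ and then applies ``step by step.'' Your explicit $m\le k$ versus $m<k$ bookkeeping and the verification that the multiplicities sum to $N$ simply spell out what the paper leaves implicit, and the count is right.
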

\begin{proof}
From
\[
\Gamma_1^-(w)=e^{\xi_t({\bf x}, w)}e^{-\xi_t(\tilde{\partial}_{\bf y}, w)},
\]
\[
\Gamma_2^-(z)=e^{\xi_t({\bf y}, z)}e^{-\xi_t(\tilde{\partial}_{\bf x}, z)},
\]
and
\[
e^{-\xi_t(\tilde{\partial}_{\bf x}, z)}e^{\xi_t({\bf x}, w)}=\frac{1-zw}{1-tzw}\frac{1-t^2zw}{1-tzw}e^{\xi_t({\bf x}, w)}e^{-\xi_t(\tilde{\partial}_{\bf x}, z)},
\]
we get
\[
\Gamma_2^-(z)\Gamma_1^-(w)=\frac{1-zw}{1-tzw}\frac{1-t^2zw}{1-tzw}:\Gamma_2^-(z)\Gamma_1^-(w):
\]
where $::$ is the normal order defined as usual. Using this formula step by step, we get the conclusion.
\end{proof}
Basing on the above preparation, we can draw the following interesting proposition.
\begin{prp}
The A-model topological string partition function on the conifold equals
\begin{eqnarray}
&&Z_{conifold}^{top}(z,t^2)=\prod_{n=1}^\infty\frac{(1-t^2z^n)^n}{(1-z^n)^n}\nonumber\\
&=&\langle 0|\prod_{m=1}^\infty \Gamma_2^+(z^{-m+1/2})\Gamma_1^+(z^{-m+1/2})\prod_{m=1}^\infty \Gamma_2^-(z^{m-1/2})\Gamma_1^-(z^{m-1/2})|0\rangle\label{zcon}
\end{eqnarray}
\end{prp}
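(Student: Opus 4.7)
The plan is to commute every $\Gamma^+$ operator to the right of every $\Gamma^-$ operator, at which point the $\Gamma^+$'s act trivially on $|0\rangle=1$, and then invoke Lemma~\ref{lemmazc}.

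First, I would establish all the pairwise commutation relations. Because $\Gamma_1^+(a)=e^{\xi(\tilde{\partial}_{\bf x},a^{-1})}$ is a pure ${\bf x}$-derivative and $\Gamma_2^-(b)$ only involves the variables ${\bf y}$ and $\tilde{\partial}_{\bf x}$, these two operators commute; symmetrically $\Gamma_2^+(a)$ and $\Gamma_1^-(b)$ commute, and any two $\Gamma^+$'s commute with each other. The only non-trivial commutation is between $\Gamma_i^+(a)$ and $\Gamma_i^-(b)$ for $i=1,2$: in $\Gamma_1^-(b)$ the $\tilde{\partial}_{\bf y}$-piece commutes with $\Gamma_1^+$, so the only contraction is between $e^{\xi(\tilde{\partial}_{\bf x},a^{-1})}$ and $e^{\xi_t({\bf x},b)}$ (and its analogue for $i=2$). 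A direct calculation gives the scalar commutator $[\xi(\tilde{\partial}_{\bf x},a^{-1}),\xi_t({\bf x},b)]=\sum_{n\geq 1}\tfrac{1-t^n}{n}(b/a)^n=\log\tfrac{1-tb/a}{1-b/a}$, and Baker--Campbell--Hausdorff yields $\Gamma_i^+(a)\Gamma_i^-(b)=\tfrac{1-tb/a}{1-b/a}\,\Gamma_i^-(b)\Gamma_i^+(a)$.

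Second, I would sweep all $\Gamma^+$'s past all $\Gamma^-$'s. With $a=z^{-m+1/2}$ and $b=z^{m'-1/2}$ one has $b/a=z^{m+m'-1}$, so each same-index encounter $(\Gamma_i^+(z^{-m+1/2}),\Gamma_i^-(z^{m'-1/2}))$ produces the factor $\tfrac{1-tz^{m+m'-1}}{1-z^{m+m'-1}}$, once for $i=1$ and once for $i=2$. Grouping by $n=m+m'-1\geq 1$, each value of $n$ arises from exactly $n$ ordered pairs $(m,m')$, so the accumulated prefactor is $\prod_{n\geq 1}\bigl((1-tz^n)/(1-z^n)\bigr)^{2n}$. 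Once moved to the right, each $\Gamma^+$ is a pure exponential of derivatives and hence acts as the identity on the constant $|0\rangle=1$, so the $\Gamma^+$-block disappears and we are left with the correlator of Lemma~\ref{lemmazc}.

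Finally, substituting the value from Lemma~\ref{lemmazc},
\[
\prod_{n\geq 1}\Bigl(\frac{1-tz^n}{1-z^n}\Bigr)^{2n}\cdot\prod_{n\geq 1}\frac{(1-z^n)^n(1-t^2z^n)^n}{(1-tz^n)^{2n}}=\prod_{n\geq 1}\frac{(1-t^2z^n)^n}{(1-z^n)^n}=Z_{conifold}^{top}(z,t^2),
\]
which is the required identity. The only real obstacle is the bookkeeping: confirming that the cross-index pairs $(\Gamma_1^+,\Gamma_2^-)$ and $(\Gamma_2^+,\Gamma_1^-)$ contribute no factor, and that the multiplicity grouping by $m+m'-1=n$ is done correctly; both are routine once the four pairwise commutation relations above are in place.
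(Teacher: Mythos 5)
Your proof is correct and follows essentially the same route as the paper's: commute the $\Gamma^+$ block past the $\Gamma^-$ block using the four pairwise relations, collect the prefactor $\prod_{n\geq 1}\bigl(\tfrac{1-tz^n}{1-z^n}\bigr)^{2n}$ from the multiplicity-$n$ grouping, and invoke Lemma~\ref{lemmazc}. Your write-up is in fact the more careful one: the paper's displayed commutation relation omits the $1-tw/z$ numerator (though its subsequent computation uses the correct factor), and your explicit bookkeeping of the cross-index pairs and of the count of pairs with $m+m'-1=n$ fills in steps the paper leaves implicit.
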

\begin{proof}
Since
\bea
\Gamma_i^+(z)\Gamma_i^-(w)&=&\frac{1}{1-w/z}\Gamma_i^-(w)\Gamma_i^+(z),\quad i=1,2,\nonumber\\
\Gamma_i^+(z)\Gamma_j^-(w)&=&\Gamma_i^j(w)\Gamma_i^+(z),\quad i,j=1,2,\ i\neq j.\nonumber
\eea
Then the right hand side of (\ref{zcon}) equals
\[
\frac{(1-tz^n)^n}{(1-z^n)^n}\frac{(1-tz^n)^n}{(1-z^n)^n}\langle 0|\prod_{m=1}^\infty \Gamma_2^-(z^{m-1/2})\Gamma_1^-(z^{m-1/2})|0\rangle
\]
By lemma (\ref{lemmazc}), we get the conclusion.

\end{proof}

\section*{Acknowledgements}
The authors gratefully acknowledge the support of Professors Ke Wu, Zi-Feng Yang, Shi-Kun Wang.
Chuanzhong Li is supported by the National Natural Science Foundation
of China under Grant No. 11571192 and K. C. Wong Magna Fund in Ningbo University.

\end{document}